\newtheorem{theorem}{Theorem}[section]
\newtheorem{lemma}[theorem]{Lemma}
\newtheorem{definition}[theorem]{Definition}
\newcommand{\edge}[2]{#1#2}
\def\Hom{\mathsf{Hom}}
\newcommand{\BIS}[2]{\#_p BIS_{#1,#2}}
\def\Aut{\mathsf{Aut}}
\def\Fix{\mathsf{Fix}}
\let\gm=\gamma
\let\ld=\lambda
\let\om=\alpha
\let\vf=\varphi
\let\sg=\sigma
\let\dl=\delta
\let\th=\theta
\let\Dl=\Delta
\def\cG{\mathcal G}
\def\cJ{\mathcal J}
\def\cK{\mathcal K}
\let\sse=\subseteq
\def\vc#1#2{#1_1,\dots,#1_#2}
\def\ghom#1{\#\mathsf{GraphHom}(#1)}
\def\ghomk#1#2{\#_{#1}\mathsf{GraphHom}(#2)}
\def\parthom#1#2{\#_{#1}\mathsf{PartHom}(#2)}
\def\deg{\mathsf{deg}}
\newcommand{\prob}[3]{
  \vbox{
  \begin{description}
    \item[\bf Name:] #1
    \vspace{-1.75ex}
    \item[\bf Input:] #2  
    \vspace{-1.75ex}
    \item[\bf Output:] #3
  \end{description}
}}
\begin{document}

\title{Counting Homomorphisms Modulo a Prime Number\thanks{This work was supported by an NSERC Discovery grant}}


\author{Amirhossein Kazeminia \\ Simon Fraser University \\ amirhossein.kazeminia@sfu.ca
\and
Andrei A. Bulatov \\ Simon Fraser University\\ abulatov@sfu.ca}








\date{}
\maketitle
\begin{abstract}
Counting problems in general and counting graph homomorphisms in particular have numerous applications in combinatorics, computer science, statistical physics, and elsewhere. One of the most well studied problems in this area is $\ghom H$ --- the problem of finding the number of homomorphisms from a given graph $G$ to the graph $H$. Not only the complexity of this basic problem is known, but also of its many variants for digraphs, more general relational structures, graphs with weights, and others.
In this paper we consider a modification of $\ghom H$, the $\ghomk pH$
problem, $p$ a prime number: Given a graph $G$, find the number of homomorphisms from $G$ to $H$ modulo $p$. In a series of papers Faben and Jerrum, and G\"{o}bel et al.\ determined the complexity of $\ghomk2H$ in the case $H$ (or, in fact, a certain graph derived from $H$) is square-free, that is, does not contain a 4-cycle. Also, G\"{o}bel et al.\ found the complexity of $\ghomk pH$ for an arbitrary prime $p$ when $H$ is a tree. Here we extend the above result to show that the $\ghomk pH$ problem is $\#_p$P-hard whenever the derived graph associated with $H$ is square-free and is not a star, which completely classifies the complexity of $\ghomk pH$ for square-free graphs $H$.
\end{abstract}

\section{Introduction}	

A homomorphism from a graph $G$ to a graph $H$ is an edge-preserving mapping from the vertex set of $G$ to that of $H$. Graph homomorphisms provide a powerful framework to model a wide range of combinatorial problems in computer science, as well as a number of phenomena in combinatorics and graph theory, such as graph parameters \cite{Hell04:homomorphism,Lovasz12:large}. Two of the most natural problems related to graph homomorphisms is $\mathsf{GraphHom}(H)$: Given a graph $G$, decide whether there is a homomorphism from $G$ to a fixed graph $H$, and its counting version $\ghom H$ of finding the number of such homomorphisms. Special cases of these problems include the $k$-Colouring and $\#k$-Colouring problems ($H$ is a $k$-clique), Bipartiteness ($H$ is an edge), counting independent sets ($H$ is an edge with a loop at one vertex) and many others. 

In general the $\mathsf{GraphHom}(H)$ and $\ghom H$ problems are NP-complete and \#P-complete, respectively. However, for certain graphs $H$ these problems are significantly easier. Hell and Nesetril \cite{Hell90:h-coloring} were the first to address this phenomenon in a systematic way. They proved that the $\mathsf{GraphHom}(H)$ problem is polynomial time solvable if and only if $H$ has a loop or is bipartite, and $\mathsf{GraphHom}(H)$ is NP-complete otherwise. In the counting case a similar result was obtained by Dyer and Greenhill \cite{DyerDichotomy}, in this case the $\ghom H$ problem is solvable in polynomial time if and only if $H$ a complete graph with all loops present or a complete bipartite graph, otherwise the problem is \#P-complete. This result was later generalized to computing partition functions for weighted graphs with nonnegative weights by Bulatov and Grohe \cite{Bulatov05:partition}, graphs with real weights by Goldberg et al.\ \cite{Goldberg10:mixed}, and finally for complex weights by Cai et al.\ \cite{Cai10:graph}. There have also been major attempts to find approximation algorithms for the number of homomorphisms and other related graph parameters, see, e.g., \cite{Bezakova18:inapproximability,Galanis17:complexity,Galanis16:approximately}.

The modification of the $\ghom H$ problem we consider in this paper concerns finding the number of homomorphisms modulo a natural number $k$. The corresponding problem will be denoted by $\ghomk kH$. Although modular counting has been considered by Valiant \cite{Valiant06:accidental} in the context of holographic algorithms, Faben and Jerrum \cite{Faben15:parity} where the first who systematically considered the problem $\ghomk 2H$. In particular, they posed a conjecture stating that this problem is polynomial time solvable if and only if a certain graph $H^{*2}$ derived from $H$ (to be defined later in this section) contains at most one vertex, and is complete in the class $\oplus\mathrm{P}=\#_2\mathrm{P}$ otherwise. Note that hardness results in this area usually show completeness in a complexity class $\#_k$P of counting the number of accepting paths in polynomial time nondeterministic Turing machines modulo $k$. The standard notion of reduction in this case is Turing reduction. Faben and Jerrum proved their conjecture in the case when $H$ is a tree. This result has been extended  by G\"{o}bel et al.\ first to the class of cactus graphs \cite{CountingMod2Cac} and then to square-free graphs \cite{CountingMod2SF} (a graph is a square-free if it does not contain a 4-cycle).
  
In this paper we follow the lead of G\"{o}bel et al.\ \cite{Gobel18:trees} and consider the problem $\ghomk pH$ for a prime number $p$. We only consider loopless graphs without parallel edges. There are similarities with the\!\! $\pmod2$ case. In particular, the derived graph constructed in \cite{Faben15:parity} can also be constructed following the same principles, it is denoted $H^{*p}$, and it suffices to study $\ghomk pH$ for this graph only. On the other hand, the problem is richer, as, for example, the polynomial time solvable cases include complete bipartite graphs. G\"{o}bel et al.\ \cite{Gobel18:trees} considered the case when $H$ is a tree. Recall that a \emph{star} is a complete bipartite graph of the form $K_{1,n}$. Stars are the only complete bipartite graphs that are trees. The main result of \cite{Gobel18:trees} establishes that $\ghomk pH$, $H$ is a tree, is polynomial time solvable if and only if $H^{*p}$ is a star. We generalize this result to arbitrary square-free graphs. 

\begin{theorem}\label{the:main-intro}
Let $H$ be a square-free graph and $p$ a prime number. Then the $\ghomk pH$ problem is solvable in polynomial time if and only if the graph $H^{*p}$ is a star, and is $\#_p$P-complete otherwise.
\end{theorem}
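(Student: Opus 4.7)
The proof naturally splits into an easy algorithmic direction and a hard hardness direction. For the algorithmic direction, the strategy is to carry over the Faben--Jerrum reduction to arbitrary prime $p$: using the standard argument that homomorphism counts from $G$ to $H$ and to any quotient of $H$ by an order-$p$ automorphism agree modulo $p$, one obtains $\#_p\Hom(G,H)\equiv \#_p\Hom(G,H^{*p})\pmod p$. Thus one may assume $H=H^{*p}$, which additionally satisfies that $|\Aut(H)|$ is coprime to $p$. When this reduced graph is a star $K_{1,n}$, homomorphisms admit a direct product formula based on the bipartition of $G$ (mapping one side to the center and the other to leaves, summed over which vertices go to the center), giving a polynomial-time count even before reducing modulo $p$.

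The heart of the theorem is the hardness direction. Assuming $H=H^{*p}$ is square-free, not a star, and has automorphism group of order coprime to $p$, the plan is to pass through the list/partially-labelled homomorphism problem $\parthom pH$, where some vertices of $G$ come with prescribed images in $H$. The key ingredient is a pinning lemma: a Turing reduction from $\parthom pH$ to $\ghomk pH$. The coprimality of $|\Aut(H)|$ and $p$ is precisely what makes orbit sums under the automorphism group invertible modulo $p$, so that a Lovász-style system of equations (expressing fixed-image counts as $\mathbb{F}_p$-linear combinations of unrestricted homomorphism counts on augmented graphs) can be solved. With pinning in hand, all further combinatorics can be done locally inside $H$.

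With $\parthom pH$ available, I would construct a gadget reducing from a known $\#_p$P-hard problem, plausibly a list-homomorphism version of the tree case established by G\"obel et al., or a modular counting problem on bipartite independent sets $\BIS 1p$. The square-free hypothesis is the workhorse here: pinning two vertices $u,v$ of $G$ to chosen images $a,b\in H$ forces any common neighbor of $u,v$ to map into $N(a)\cap N(b)$, and in a square-free graph $|N(a)\cap N(b)|\le 1$, which tightly controls the extensions through any two-vertex "bottleneck" gadget. Iterating such a gadget with a size parameter $k$ and interpolating the resulting polynomial in $k$ over $\mathbb{F}_p$ (using that the relevant Vandermonde-type matrix is invertible modulo $p$) lets one extract individual gadget counts that encode the base hard problem.

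The main obstacle I expect is the structural heterogeneity of square-free non-star graphs: unlike trees, they can contain arbitrarily long induced cycles and intricate bipartite blocks, so a single universal gadget is unlikely to suffice. The proof will almost certainly proceed by a case analysis on the local structure of $H$ near a witnessing substructure (for instance, an induced $P_4$, or vertices of degree at least two on both sides of a bipartition, or an induced cycle of length $\ge 5$), with each case requiring its own pinning choices and interpolation. A secondary difficulty, absent in the $p=2$ work, is that there are no parity cancellations to lean on: every gadget must be analysed as an exact residue modulo $p$, which forces genuinely $\mathbb{F}_p$-linear-algebraic arguments rather than combinatorial sign-flipping, and one must verify at each case that the extracted coefficients are nonzero modulo $p$ so that the interpolation actually recovers the target count.
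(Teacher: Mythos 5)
Your high-level architecture matches the paper exactly: reduce $\ghomk pH$ to $H=H^{*p}$ via the Faben--Jerrum lemma, handle the star case directly, pass through the partially-labelled problem $\parthom pH$ using a pinning lemma whose correctness hinges on $p\nmid|\Aut(H)|$, and then gadget-reduce from a modular bipartite independent set count, exploiting square-freeness via $|N(a)\cap N(b)|\le 1$. All of that is right and is precisely the paper's chain of reductions $\BIS{\lambda_1}{\lambda_2}\le_T\parthom p{H^{*p}}\le_T\ghomk p{H^{*p}}\le_T\ghomk pH$.

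Where you diverge is in the technical heart, and the divergence is a real gap rather than merely a different route. You propose iterating a two-vertex bottleneck gadget over a size parameter $k$ and recovering coefficients by Vandermonde interpolation modulo $p$. The paper does no interpolation at all. Instead it defines an abstract \emph{hardness gadget} --- a vertex gadget forcing the pinned endpoint into a set $\Delta$ with $|\Delta|-1\not\equiv0\pmod p$, and an edge gadget whose extension count is $0$ in the forbidden configuration and exactly $\equiv1\pmod p$ in all allowed ones --- and shows (Theorem~\ref{TH:hardness}) that such gadgets yield a single \emph{parsimonious} reduction from $\BIS{\lambda_1}{\lambda_2}$ with $\lambda_i=|\Delta_i|-1$. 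The gadget counts are not extracted by linear algebra; they are \emph{engineered} to be $\equiv1$ by choosing the edge gadget to follow a non-consecutive walk $\gamma_0\gamma_1\cdots\gamma_k$ in $H$: square-freeness pins down the homomorphism uniquely once one endpoint is shifted off the walk, and the only free count is $1+\sum_i(\deg(\gamma_i)-1)$, which is $\equiv1$ precisely when all interior degrees are $\equiv1\pmod p$. That observation drives the actual case analysis, which is by \emph{degrees modulo $p$} (Case 1: at least two vertices of degree $\not\equiv1$; Case 2: exactly one, on a cycle or not; Case 3: none) --- not the induced-subgraph case split ($P_4$, long induced cycles, etc.) you anticipated. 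Each case supplies a path, cycle, or path-cycle-path nc-walk whose interior degrees cooperate. A related point you underweight: the target $\BIS{\lambda_1}{\lambda_2}$ cannot be fixed in advance --- its weights $\lambda_i=|\Delta_i|-1$ are dictated by $H$ and the chosen gadget, which is why the paper needs hardness of the full two-parameter family (and why your ``$\#_pBIS_{1,p}$'' is off: $\lambda_2\equiv0\pmod p$ makes that instance tractable by Theorem~\ref{TH:weightedCompleteness}). Your interpolation plan is not obviously salvageable --- over $\mathbb{F}_p$ you have only $p$ distinct evaluation points, so the degree of the iterated-gadget polynomial would have to be bounded by $p-1$, and nothing in the construction guarantees that --- so you should replace it with the direct nc-walk gadget analysis keyed to the degree-class case split.
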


We now explain the main ideas behind our result, as well as, the majority of results in this area. As it was observed by Faben and Jerrum \cite{Faben15:parity}, the automorphism group $\Aut(H)$ of graph $H$ plays a very important role in solving the $\ghomk pH$ problem. Let  $\vf$ be a homomorphism from a graph $G$ to $H$. Then composing $\vf$ with an element from $\Aut(H)$ we again obtain a homomorphism from $G$ to $H$. The set of all such homomorphisms forms the orbit of $\vf$ under the action of $\Aut(H)$. If $\Aut(H)$ contains an automorphism $\pi$ of order $p$, the cardinality of the orbit of $\vf$ is divisible by $p$, unless $\pi\circ\vf=\vf$, that is, the range of $\vf$ is the set of fixed points $\Fix(\pi)$ of $\pi$ ($a\in V(H)$ is a fixed point of $\pi$ if $\pi(a)=a$). Let $H^\pi$ denote the subgraph of $H$ induced by $\Fix(\pi)$. We write $H\Rightarrow_p H'$ if there is $\pi\in\Aut(H)$ such that $H'$ is isomorphic to $H^\pi$. We also write $H\Rightarrow^*_p H'$ if there are graphs $\vc Hk$ such that $H$ is isomorphic to $H_1$, $H'$ is isomorphic to $H_k$, and $H_1\Rightarrow_p H_2\Rightarrow_p\dots\Rightarrow_p H_k$. 

\begin{lemma}[\cite{Faben15:parity}]\label{lem:aut-reduction}
Let $H$ be a graph and $p$ a prime. Up to an isomorphism there is a unique smallest (in terms of the number of vertices) graph $H^{*p}$ such that $H\Rightarrow_p^* H^{*p}$, and for any graph $G$ it holds 
\[
|\Hom(G,H)|\equiv|\Hom(G,H^{*p})|\pmod p.
\]
Moreover, $H^{*p}$ does not have automorphisms of order $p$.
\end{lemma}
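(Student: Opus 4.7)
The plan is to decompose the statement into four ingredients: (i) a single-step congruence; (ii) termination of the rewriting $\Rightarrow_p$; (iii) uniqueness up to isomorphism of any graph where the rewriting terminates; and (iv) the fact that the minimal graph cannot have an automorphism of order $p$. Items (i), (ii) and (iv) are essentially bookkeeping; the real content lies in (iii), which is a modular analog of Lov\'asz's theorem on graph homomorphisms.

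For (i), fix $\pi\in\Aut(H)$ of order $p$ and let $H'=H^\pi$. The cyclic group $\langle\pi\rangle$ acts on $\Hom(G,H)$ by $\vf\mapsto\pi\circ\vf$. Since $p$ is prime, every orbit has size $1$ or $p$, and the size-$1$ orbits are exactly those $\vf$ whose image lies in $\Fix(\pi)$, i.e., the elements of $\Hom(G,H')$. Reducing modulo $p$ yields the one-step congruence. For (ii), a non-identity automorphism of order $p$ has at least one orbit of size $p$ on $V(H)$, so $|V(H^\pi)|\le|V(H)|-p$; hence each $\Rightarrow_p$ step strictly decreases the number of vertices, the rewriting terminates at some graph without automorphisms of order $p$, and iterating (i) shows that every such terminal $H'$ satisfies $|\Hom(G,H)|\equiv|\Hom(G,H')|\pmod p$ for all $G$.

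The heart of the proof is (iii). Let $H_1,H_2$ be two terminal reductions of $H$, so neither has an automorphism of order $p$ and $|\Hom(G,H_1)|\equiv|\Hom(G,H_2)|\pmod p$ for every $G$. The classical Lov\'asz identity expresses $|\Hom(G,H)|$ as an integer linear combination $\sum_P|\mathsf{Inj}(G/P,H)|$, where $P$ ranges over partitions of $V(G)$ for which $G/P$ is a simple graph; M\"obius inversion over the partition lattice inverts this with integer coefficients as well. Both directions preserve congruences modulo $p$, so $|\mathsf{Inj}(G,H_1)|\equiv|\mathsf{Inj}(G,H_2)|\pmod p$ for all $G$. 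Setting $G=H_1$ and noting that an injective homomorphism from a finite simple graph to itself is an automorphism, we obtain
\[
|\Aut(H_1)|=|\mathsf{Inj}(H_1,H_1)|\equiv|\mathsf{Inj}(H_1,H_2)|\pmod p.
\]
By Cauchy's theorem, the absence of order-$p$ elements in $\Aut(H_1)$ forces $p\nmid|\Aut(H_1)|$, so the right-hand side is nonzero modulo $p$, in particular positive; this gives an injective homomorphism $H_1\hookrightarrow H_2$, and by symmetry one $H_2\hookrightarrow H_1$. Therefore $|V(H_1)|=|V(H_2)|$, and the composition of these two injections is an injective endomorphism of $H_1$, hence an automorphism, which forces each of the two injections to be an isomorphism.

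Finally, (iv) is immediate: if $H^{*p}$ admitted some $\pi\in\Aut(H^{*p})$ of order $p$, then $H\Rightarrow_p^*H^{*p}\Rightarrow_p(H^{*p})^\pi$ would yield a strictly smaller graph reachable from $H$, contradicting the minimality established in (iii). The main obstacle in carrying this out is step (iii): one has to verify carefully that the M\"obius/inclusion--exclusion passage between $\Hom$ and $\mathsf{Inj}$ has integer coefficients (so that mod-$p$ congruences transfer cleanly), and then upgrade the existence of a pair of injective homomorphisms into a genuine isomorphism using the fact that both graphs are finite and admit mutual injections.
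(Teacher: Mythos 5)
The paper offers no proof of this lemma; it is quoted from Faben and Jerrum, and your reconstruction follows their argument exactly: orbit counting under the $\langle\pi\rangle$-action for the one-step congruence, strict decrease by at least $p$ vertices for termination, and the Lov\'asz-style M\"obius inversion from $\Hom$ to $\mathsf{Inj}$ combined with Cauchy's theorem to get mutual injections and hence uniqueness of the terminal graph. The proof is correct, including the often-glossed points that a bijective endomorphism of a finite graph is an automorphism and that the mutual injections compose to an automorphism forcing each to be an isomorphism.
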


The easiness part of Theorem~\ref{the:main-intro} follows from the classification of the complexity of $\ghom H$ by Dyer and Greenhill \cite{DyerDichotomy}. Since whenever $\ghom H$ is polynomial time solvable, so is $\ghomk pH$ for any $p$, Lemma~\ref{lem:aut-reduction} implies that if $H^{*p}$ is a complete graph with all loops present or a complete bipartite graph the problem $\ghomk pH$ is also solvable in polynomial time. We restrict ourselves to loopless square-free graphs, therefore, as $H^{*p}$ is isomorphic to an induced subgraph of $H$, \cite{DyerDichotomy} only guarantees polynomial time solvability when $H^{*p}$ is a star. 

Another ingredient in our result is the $\#_p$P-hard problem we reduce to $\ghomk pH$. In most of the cited works the hard problem used to prove the hardness of $\ghomk 2H$ is the problem $\#_2IS$ of finding the parity of the number of independent sets. This problem was shown to be $\#_2$P-complete by Valiant \cite{Valiant06:accidental}. We use a slightly different problem. For two positive real numbers $\ld_1,\ld_2$, let $\#_pBIS_{\ld_1,\ld_2}$ denote the following problem of counting weighted independent sets in bipartite graphs, where$\mathcal{IS}(G)$ denotes the set of all independent sets of $G$

\prob{$\BIS{\lambda_1}{\lambda_2}$}{a bipartite graph $G$}{$Z_{\lambda_1,\lambda_2}(G)=\sum_{I\in \mathcal{IS}(G)} \lambda_1 ^{|V_L \cap I|}\lambda_2 ^{{|V_R \cap I|}}\: \pmod{p}$.}

It was shown by G\"{o}bel et al.\ in \cite{Gobel18:trees} that $\#_pBIS_{\ld_1,\ld_2}$ is $\#_p$P-complete for any $\ld_1,\ld_2$, unless one of them is equal to $0\pmod p$. The main technical statement we prove here is the following

\begin{theorem}\label{the:main-technical-intro} 
Let $H$ be a square-free graph such that $H^{*p}$ is not a star. Then there are $\ld_1,\ld_2\not\equiv0\pmod p$ such that $\BIS{\ld_1}{\ld_2}$ is polynomial time reducible to the $\ghomk pH$ problem.
\end{theorem}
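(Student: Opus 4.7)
By Lemma~\ref{lem:aut-reduction}, the problems $\ghomk pH$ and $\ghomk p{H^{*p}}$ are polynomial-time equivalent, so we may assume throughout that $H=H^{*p}$. Thus $H$ is square-free, is not a star, and has no automorphism of order $p$. The plan is: given a bipartite instance $G=(V_L\cup V_R,E)$ of $\BIS{\ld_1}{\ld_2}$, construct a graph $G'$ such that
\[
|\Hom(G',H)|\equiv c\cdot Z_{\ld_1,\ld_2}(G)\pmod p
\]
for some fixed $\ld_1,\ld_2,c\not\equiv 0\pmod p$ depending only on $H$. The construction will replace each vertex of $G$ by a \emph{vertex gadget} (one of two kinds, one per side of the bipartition) and each edge of $G$ by an \emph{edge gadget}.

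\textbf{Step 1: Locating an anchoring substructure in $H$.} I would first show that, because $H=H^{*p}$ is square-free and not a star, it must contain nontrivial substructure suitable as an anchor --- for instance, a path of length three, or a pair of vertices with distinguishable neighborhoods, or a short induced cycle. Square-freeness (any two vertices share at most one common neighbor) keeps the arithmetic of small gadgets clean, and the absence of order-$p$ automorphisms forces vertices of $H$ to come in sufficiently differentiated ``types''. The output of this step is a distinguished pair $(a,b)\in V(H)^2$ whose local structure will anchor the gadgets below; the exact choice may split into cases according to whether $H$ is bipartite, has an odd cycle, and so on.

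\textbf{Step 2: Vertex and edge gadgets.} For each $v\in V_L$ I would attach a gadget $J_L$ with a distinguished terminal identified with the copy of $v$ in $G'$, and analogously $J_R$ for $v\in V_R$. These gadgets are to be designed so that, for each $h\in V(H)$, the number of homomorphisms from $J_L$ to $H$ sending the terminal to $h$ is congruent modulo $p$ to a prescribed weight $w_L(h)$; the target is that $w_L$ is supported essentially on a pair $\{a,a'\}$ with values $1,\ld_1$ and $w_R$ on $\{b,b'\}$ with values $1,\ld_2$. Each edge of $G$ is then replaced by an edge gadget between the corresponding terminals, engineered to contribute $0\pmod p$ when both terminals land in the ``in-the-independent-set'' slots $a',b'$ and a common nonzero value otherwise, so that only maps corresponding to an independent set of $G$ survive modulo $p$.

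\textbf{Step 3: Assembly and main obstacle.} Summing over homomorphisms of $G'$ and multiplying the local contributions yields
\[
|\Hom(G',H)|\equiv c\cdot Z_{\ld_1,\ld_2}(G)\pmod p,
\]
from which $Z_{\ld_1,\ld_2}(G)\pmod p$ can be recovered since $c\not\equiv 0$. The hard part is Step~2: constructing vertex and edge gadgets uniformly for every square-free $H$ with $H^{*p}$ not a star, while certifying that the resulting local weights $w_L,w_R$ and the normalizing constant $c$ are all nonzero modulo $p$. This is precisely where the hypothesis that $H$ has no automorphism of order $p$ becomes essential: an orbit-counting argument of Faben--Jerrum type shows that many undesired contributions cancel modulo $p$, and the absence of order-$p$ automorphisms ensures that $H$ contains vertices of sufficiently distinguishable types to be separated by gadgets. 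I expect the final construction to involve a genuine case analysis based on the anchoring substructure identified in Step~1, with the trickiest cases being those in which $H$ has small diameter and almost all vertices look alike locally.
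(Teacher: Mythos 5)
Your proposal correctly identifies the overall shape of the argument (pass to $H^{*p}$, attach vertex gadgets and edge gadgets, recover $Z_{\ld_1,\ld_2}(G)$ from a homomorphism count), but it stops exactly where the theorem's content begins: Steps 1 and 2 are stated as goals, not carried out, and you acknowledge as much. Two concrete gaps. First, you propose to build an ordinary graph $G'$ with $|\Hom(G',H)|\equiv c\cdot Z_{\ld_1,\ld_2}(G)$, omitting the intermediate problem $\parthom pH$ of counting homomorphisms of \emph{partially $H$-labelled} graphs. This is not a cosmetic omission: every gadget in the actual proof relies on pinning (each internal vertex $v_i$ of the edge gadget has a pendant neighbour $u_i$ pinned to a vertex $\gm_i$ of $H$), and the passage from the pinned to the unpinned problem is a separate Turing reduction (Theorem~\ref{TH:SECRED}, due to G\"obel et al.), which is the place where the absence of order-$p$ automorphisms is actually consumed. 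Your ``orbit-counting argument of Faben--Jerrum type'' in Step~3 gestures at this but supplies neither the reduction nor the gadgets that would replace pinning.

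Second, the anchoring structure and case analysis you predict (bipartite versus odd cycle, ``distinguishable types'' of vertices) is not the one that works. The proof splits on the \emph{degree sequence of $H$ modulo $p$}: (i) at least two vertices $\om,\beta$ with $\deg\not\equiv1\pmod p$, in which case the edge gadget is built on a shortest path between them and $\Dl_1=N_H(\om)$, $\Dl_2=N_H(\beta)$, giving $\ld_i=\deg(\cdot)-1\not\equiv0$; (ii) exactly one such vertex $\th$, subdivided by whether $\th$ lies on a cycle (a separate lemma shows $H$ cannot be a tree here, using the no-order-$p$-automorphism hypothesis to build an automorphism permuting $p$ leaves); (iii) all degrees $\equiv1\pmod p$, where one takes $\Dl_1=\Dl_2=\{\gm_1,\gm_k\}$ on a cycle and $\ld_1=\ld_2=1$. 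The engine making all of these work is square-freeness via the shifting property of nc-walks (Lemmas~\ref{lem:shiftingWalk} and~\ref{lem:coungingWalk}): common neighbourhoods have size at most one, so a homomorphism of the path gadget is forced to translate the walk by one step, which pins down the counts $0$ or $1\pmod p$ required by Definition~\ref{def::hardness}. None of this machinery appears in your proposal, so as written it is a plausible plan rather than a proof.
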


We note that the requirement of being square-free is present in all results on modular counting of graph homomorphism. Clearly, this is an artifact of the techniques used in all these works, and so overcoming this requirement would be a substantial achievement.

\section{Preliminaries}

We use $[n]$ to denote the set $\{1,...,n\}$. Also, we usually abbreviate $A \setminus \{x\}$ to $A-x$. Let $k$ be a positive integer, then for a function $f$ its $k$-fold composition is denoted by $f^{(k)}=f \circ f \circ \cdots \circ f$.

\textbf{Graphs.}
In this paper, \textit{graphs} are undirected, and have no parallel edges or loops. For a graph $G$, the set of vertices of $G$ is denoted by $V(G)$, and the set of edges is denoted by $E(G)$. We use $\edge{u}{v}$ to denote an edge of $G$.  
The set of \emph{neighbours} of a vertex $v\in V(G)$ is denoted by $N_G (v)= \{ u \in V(G) : \edge{u}{v} \in E(G) \}$, and the degree of $v$ is denoted by $\deg(v)$. 

A set $I \subseteq V(G)$ is an \emph{independent set} of $G$ if and only if $\edge{u}{v}$ is an edge of $G$ for no $u,v \in I$. The set of all independent sets of $G$ is denoted by $\mathcal{IS}(G)$. If $G$ is a bipartite graph, the parts of a bipartition of $V(G)$ will be denoted $V_R(G)$ and $V_L(G)$ in no particular order.

\textbf{Homomorphisms.}
A \emph{homomorphism} from a graph $G$ to a graph $H$ is a mapping $\vf$ from $V(G)$ to $V(H)$ which preserves edges, i.e.\ for any $\edge uv\in E(G)$ the pair $\edge{\vf(u)}{\vf(v)}$ is an edge of $H$. The set of all homomorphisms from $G$ to $H$, is denoted by $\Hom(G,H)$. For a graph $H$ the problem of counting homomorphisms from a graph $G$ to $H$ is denoted by $\ghom H$. The problem of finding the number of homomorphisms from a given graph $G$ to $H$ modulo $k$ is denoted by $\ghomk kH$:

\prob{$\ghomk kH$}{ a graph $G$}{$|\Hom(G,H)| \pmod{k}$.}

It will be convenient to denote the vertices of the graph $H$ by lowercase Greek letters. 

A homomorphism $\vf$ from $G$ to $H$ is an \textit{isomorphism} if it is bijective and for all $u,v \in V(G)$, $\edge{u}{v} \in E(G)$ if and only if $\edge{\vf(u)}{\vf(v)}\in E(H)$.  An \textit{automorphism} of $G$ is an isomorphism from graph $G$ to itself. The automorphism group of $G$ is denoted by $\Aut(G)$. An automorphism $\pi$ is an \textit{automorphism of order $k$} if $k$ is the smallest positive integer such that $\pi^{(k)}$ is the identity transformation. A \textit{fixed point} of an automorphism $\pi$ of $G$ is a vertex $v\in V(G)$ such that $v=\pi(v)$.  

\textbf{Partially labelled graphs.}
A \emph{partial function} from $X$ to $Y$ is a function $f:X' \rightarrow Y$ for a subset $X' \subseteq X$. For a graph $H$, a \emph{partial $H$-labelled graph} $\cG$ is a graph $G$ (called the \emph{underlying graph} of $\cG$) equipped with a \emph{pinning function} $\tau$, which is a partial function from $V(G)$ to $V(H)$. 
A homomorphism from a partial $H$-labelled graph $\cG=(G,\tau)$ to a graph $H$ is a homomorphism $\sigma : G \rightarrow H$ that extends the pinning function $\tau$, that is, for all $v\in \text{dom}(\tau)$, $\sigma (v)= \tau (v)$. The set of all such homomorphisms is denoted by $\Hom(\cG,H)$.

In certain situations it will be convenient to use a slightly different view on collections of homomorphisms of $H$-labelled graphs. A set of homomorphisms $\vf$ from a graph $G$ to $H$ that map vertices $x_1,x_2,...,x_r \in V(G)$ to vertices $y_1,y_2,...,y_r \in V(H)$ such that $\vf(x_i)=y_i$ for $i\in [r]$ is denoted by $\Hom((G,x_1,x_2,...,x_r),(H,y_1,y_2,...,y_r)).$

\textbf{Counting complexity classes.}
The class $\#$P is defined to be the class of problems of counting the accepting paths of a polynomial time nondeterministic Turing machine. This means every problem in NP has an associated counting problem in \#P, so for $A\in\mathrm{NP}$, an associated counting problem will be denoted by $\#A$. (Strictly speaking for every such problem the corresponding counting one is not uniquely defined, but in our case there will always be the `natural' one.) Classes $\#_k$P, where $k$ is a natural number are defined in a similar way, as counting the accepting paths in a polynomial time nondeterministic Turing machine modulo $k$. For $A\in\mathrm{NP}$ the corresponding problem in $\#_k\mathrm{P}$ is denoted by $\#_kA$.

Several kinds of reductions between counting problems have appeared in the literature. The first one, parsimonious, was introduced in the foundational papers \cite{Valiant79:permanent,Valiant79:complexity} by Valiant. A counting problem $A$ is \emph{parsimoniously reducible} to a counting problem $B$, denoted $A\le B$, if there is a polynomial time algorithm that, given an instance $I$ of $A$, produces an instance $J$ of $B$ such that the answers to $I$ and $J$ are the same. The other type of reduction frequently used for counting problems is Turing reduction. Counting problem $A$ is \emph{Turing reducible} to problem $B$, denoted $A\le_T B$, if there exists a polynomial time algorithm solving $A$ and  using $B$ as an oracle. 

These two types of reductions can be applied to modular counting as well. Turing reduction does not require any modifications. For parsimonious reduction we say that a problem $A$ from $\#_k\mathrm{P}$ is \emph{parsimoniously reducible} to a problem $B$ from $\#_k\mathrm{P}$ if there is a polynomial time algorithm that, given an instance $I$ of $A$, produces an instance $J$ of $B$ such that the answers to $I$ and $J$ are congruent modulo $k$. In this paper we mostly claim Turing reducibility, although our main technical result constructs a parsimonious reduction. However, the proof of Theorem~\ref{the:main-intro} involves other reductions that are not always parsimonious. Problem $\#_kA$ is said to be \emph{$\#_k\mathrm{P}$-complete} if it belongs to $\#_k\mathrm{P}$ and every problem from $\#_k\mathrm{P}$ is Turing reducible to $\#_kA$.


\section{Outline of the proof}

In this section we outline our proof strategy and formally introduce all the necessary intermediate problems and existing results.  Fix a prime number $p$.

As it was observed in the introduction, Lemma~\ref{lem:aut-reduction} proved by  Faben and Jerrum \cite{Faben15:parity} combined with the classification by Dyer and Greenhill \cite{DyerDichotomy} proves the easiness part of Theorem~\ref{the:main-intro}. We therefore focus on proving the hardness part. Again, by Lemma~\ref{lem:aut-reduction} we may assume that $H$ does not have automorphisms of order $p$.

For the hardness part, we use two auxiliary problems. The first one is the problem $\BIS{\lambda_1}{\lambda_2}$ mentioned in the introduction. 
Let $\lambda_1 , \lambda_2 \in\{0,\dots p-1\}$, and let $G=(V_L \cup V_R, E)$ be a bipartite graph. Define the following weighted sum over independent sets of $G$: 
\begin{equation*}
    Z_{\lambda_1,\lambda_2}(G)=\sum_{I\in \mathcal{IS}(G)} \lambda_1 ^{|V_L \cap I|}\lambda_2 ^{{|V_R \cap I|}}.
\end{equation*}
The problem of computing function $Z_{\lambda_1,\lambda_2}(G)$ for a given bipartite graph $G$, prime number $p$ and $\lambda_1, \lambda_2 \in\{0,\dots p-1\}$,  is defined as follows:

\prob{$\BIS{\lambda_1}{\lambda_2}$}{a bipartite graph $G$}{$Z_{\lambda_1,\lambda_2}(G)\: \pmod{p}$.}

The complexity of $\BIS{\lambda_1}{\lambda_2}$ was determined by G\"{o}bel, Lagodzinski and Seidel \cite{Gobel18:trees}. 

\begin{theorem}\label{TH:weightedCompleteness}\cite{Gobel18:trees}
If $\lambda_1 \equiv 0 \pmod{p}$ or $\lambda_2 \equiv 0 \pmod{p}$ then the problem $\BIS{\lambda_1}{\lambda_2}$ is solvable in polynomial time, otherwise it is $\#_p\mathrm{P}$-complete.
\end{theorem}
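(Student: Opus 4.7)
The easy direction is elementary: if $\lambda_1 \equiv 0 \pmod p$, every independent set $I$ with $V_L \cap I \neq \emptyset$ contributes $0$, so $Z_{\lambda_1, \lambda_2}(G) \equiv \sum_{I \subseteq V_R} \lambda_2^{|I|} = (1+\lambda_2)^{|V_R|} \pmod p$, which is polynomial-time computable; the case $\lambda_2 \equiv 0$ is symmetric since $V_L$ is also an independent set.

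For the hardness direction, my plan is to proceed in two stages. First, fix a canonical base case—say $\BIS{1}{1}$, the straight count of independent sets in a bipartite graph modulo $p$—and show it is $\#_p$P-complete by a direct encoding of a known $\#_p$P-complete problem such as $\#_p\mathsf{SAT}$ on monotone formulas: one side of the bipartition represents variables, the other represents clauses, and a standard pinning/selection gadget enforces that the relevant independent sets correspond bijectively to satisfying assignments. Second, reduce $\BIS{1}{1}$ to $\BIS{\lambda_1}{\lambda_2}$ for arbitrary nonzero $\lambda_1,\lambda_2$ via a polynomial-interpolation argument driven by a pendant gadget.

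The pendant gadget is the heart of the reduction. Given a bipartite $G$, form $G_{k,l}$ by attaching $k$ fresh pendant vertices to each vertex of $V_L$ and $l$ to each vertex of $V_R$ (the new pendants lying on the opposite side of the bipartition from their attachment point). A direct calculation, factoring independently over each attachment site, gives
\begin{equation*}
Z_{\lambda_1, \lambda_2}(G_{k,l}) \;=\; (1+\lambda_2)^{k|V_L|} (1+\lambda_1)^{l|V_R|} \cdot Z_{\mu_k,\nu_l}(G),
\end{equation*}
where $\mu_k = \lambda_1 / (1+\lambda_2)^k$ and $\nu_l = \lambda_2 / (1+\lambda_1)^l$. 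Writing $Z_{\mu,\nu}(G) = \sum_{i,j} N_{i,j}\, \mu^i \nu^j$, where $N_{i,j}$ is the number of independent sets of $G$ having $i$ vertices in $V_L$ and $j$ in $V_R$, one can query the oracle on $G_{k,l}$ for varying $(k,l)$ and perform bivariate Vandermonde interpolation over $\mathbb{F}_p$ to recover every $N_{i,j}\bmod p$, after which $Z_{1,1}(G) = \sum_{i,j} N_{i,j} \pmod p$ is immediate.

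The principal obstacle is that modular interpolation is constrained: the values $\mu_k$ lie in the multiplicative subgroup generated by $1+\lambda_2$ in $\mathbb{F}_p^\ast$, whose order may be much less than the polynomial degree $|V_L|$, and one must additionally handle the degenerate subcases $1+\lambda_1 \equiv 0$ or $1+\lambda_2 \equiv 0 \pmod p$ where the pendant gadget collapses. Overcoming this will require replacing the single pendant with a more elaborate tree gadget whose associated transfer matrix produces a richer family of effective weight ratios, and interpolating block-by-block indexed by the residue classes of $(i,j)$ modulo the relevant group orders, assembling the answer via a Lucas-style lifting. Keeping this two-dimensional modular interpolation tight while preserving bipartiteness of every constructed gadget is the main technical difficulty of the proof.
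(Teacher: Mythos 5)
This theorem is not proved in the paper at all: it is imported verbatim from G\"{o}bel, Lagodzinski and Seidel \cite{Gobel18:trees}, so there is no internal proof to compare against. Judged on its own merits, your easy direction is correct ($Z_{\lambda_1,\lambda_2}(G)\equiv(1+\lambda_2)^{|V_R|}$ when $\lambda_1\equiv 0$, since every subset of $V_R$ is independent), but the hardness direction has two genuine gaps.

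First, the base case. That $\BIS{1}{1}$ is $\#_p$P-complete is itself a nontrivial theorem, and the ``standard pinning/selection gadget'' you invoke does not exist in any standard form: independent sets of a variable--clause bipartite graph do not correspond bijectively (or parsimoniously) to satisfying assignments of a monotone formula, and exact \#P-hardness of bipartite independent set counting is classically obtained by interpolation-style arguments rather than a parsimonious encoding. In the modular setting one would normally start from Faben's completeness results for $\#_p\mathsf{IS}$ and then do real work to pass to bipartite instances; this step cannot be waved through.

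Second, and more fundamentally, the interpolation step is blocked, not merely ``constrained.'' All effective weights $\mu_k=\lambda_1(1+\lambda_2)^{-k}$ and $\nu_l=\lambda_2(1+\lambda_1)^{-l}$ live in $\mathbb{F}_p^{\ast}$, a set of constant size $p-1$, while the polynomial $\sum_{i,j}N_{i,j}\mu^i\nu^j$ has degree $|V_L|$ in $\mu$ and $|V_R|$ in $\nu$, which grow with the input. No gadget, however elaborate, produces more than $p-1$ distinct evaluation points per coordinate, so the Vandermonde system is irreparably underdetermined and one can only ever recover aggregates of the $N_{i,j}$ over residue classes of the exponents modulo the order of the relevant cyclic subgroup --- which does not determine $Z_{1,1}(G)=\sum_{i,j}N_{i,j}$ unless the point $(1,1)$ happens to be reachable, i.e.\ unless $\lambda_1\in\langle 1+\lambda_2\rangle$ and $\lambda_2\in\langle 1+\lambda_1\rangle$. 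You name this obstacle yourself, but ``richer transfer matrices'' and ``Lucas-style lifting'' cannot escape it, because the target quantity is a specific evaluation outside the reachable orbit. The degenerate cases $1+\lambda_1\equiv 0$ or $1+\lambda_2\equiv 0$ are likewise left open. The cited proof in \cite{Gobel18:trees} does not attempt to recover the unweighted count by interpolation; it establishes hardness of the weighted problem directly by reduction from a problem already known to be $\#_p$P-complete, with gadgetry tailored to the given pair $(\lambda_1,\lambda_2)$. As written, your reduction does not close.
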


The second auxiliary problem has been used in all works on $\ghomk pH$ starting from the initial paper by Faben and Jerrum \cite{Faben15:parity}. It is the problem of counting homomorphisms from a given partially $H$-labelled graph $\cG$ to a fixed graph $H$ modulo prime $p$.

\prob{$\parthom pH$}{a partial $H$-labelled graph $\cG=(G,\tau)$}{$|\Hom(\cG,H)| \pmod{p}$.}

The chain of reductions we use to prove the hardness part of Theorem~\ref{the:main-intro} is the following:
\begin{equation}\label{RED:chain.}
    \BIS{\lambda_1}{\lambda_2} \leq_T \parthom p{H^{*p}} \leq_T \ghomk p{H^{*p}} \le_T \ghomk pH.
\end{equation}

The last reduction is by Lemma~\ref{lem:aut-reduction}. Acually, the two last problems in the chain are polynomial time interreducible through (modular) parsimonious reduction. The second step, the reduction from $\parthom pH$ to $\ghomk pH$ was proved by G\"{o}bel, Lagodzinski and Seidel \cite{Gobel18:trees}.

\begin{theorem}\label{TH:SECRED}\cite{Gobel18:trees}
Let $p$ be a prime number and let $H$ be a graph that does not have any automorphism of order $p$. Then $\parthom pH$ can be reduced to $\ghomk pH$ through a polynomial time Turing reduction.
\end{theorem}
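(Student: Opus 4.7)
The plan is to simulate the pinning function $\tau$ one vertex at a time, so it suffices to show: given a graph $G$, a vertex $v\in V(G)$, and $\alpha\in V(H)$, the quantity $N_\alpha(G,v):=|\{\vf\in\Hom(G,H):\vf(v)=\alpha\}|$ can be computed modulo $p$ using an oracle for $\ghomk pH$. Summing over $\alpha$ already yields one linear equation, namely $|\Hom(G,H)|=\sum_\beta N_\beta(G,v)$, in the unknowns $(N_\beta(G,v))_{\beta\in V(H)}$. I would produce many more equations via \emph{pinning gadgets}: a rooted graph $(J,u)$ attached at $v$ (by identifying $u$ with $v$) yields a graph $G\ast J$ satisfying
\[
|\Hom(G\ast J,H)|=\sum_{\beta\in V(H)}N_\beta(G,v)\cdot h(J)_\beta,\qquad h(J)_\beta:=|\Hom((J,u),(H,\beta))|,
\]
so each gadget gives one equation whose left-hand side is evaluable modulo $p$ by a single oracle call.

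The goal is then to construct a family $(J_1,u_1),\dots,(J_m,u_m)$ of polynomial total size such that the standard basis vector $e_\alpha$ lies in the $\mathbb{F}_p$-span of the rows $(h(J_i)_\beta)_\beta$. Observe first that for every $\pi\in\Aut(H)$ and every gadget $J$ we have $h(J)_{\pi(\beta)}=h(J)_\beta$, so $N_\beta(G,v)=N_{\pi(\beta)}(G,v)$ and one may collapse the unknowns to $\Aut(H)$-orbits; it therefore suffices to separate orbits modulo $p$. The key arithmetic input is that $H$ has no automorphism of order $p$: by Cauchy's theorem $\gcd(|\Aut(H)|,p)=1$, hence all orbit and stabilizer sizes are invertible in $\mathbb{F}_p$.

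The construction of gadgets proceeds along standard lines. Given any rooted gadget $(J,u)$, the graph obtained by gluing $k$ disjoint copies of $J$ along $u$ has pinning vector equal to the entrywise $k$-th power $h(J)^{(k)}$; varying $k\in\{0,1,\dots,|V(H)|\}$ and invoking Vandermonde invertibility over $\mathbb{F}_p$ allows one to isolate contributions corresponding to any single value appearing among the entries of $h(J)$. Combining an orbit-separating family of such gadgets (one for each pair of distinct orbits) via disjoint unions at the root and these power tricks then produces, by linear algebra, a vector proportional to $e_\alpha$ modulo $p$, and the proportionality constant is a unit in $\mathbb{F}_p$ by the orbit-size remark.

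The main obstacle is the initialization step: exhibiting, for every pair of distinct $\Aut(H)$-orbits, some rooted gadget whose pinning vector takes distinct values modulo $p$ on representatives of those orbits. In characteristic zero two vertices with identical pinning vectors for every rooted gadget are automorphic, so separators always exist; modulo a small prime, however, the counts may coincidentally collapse, and ruling this out is the technical heart of the argument. This is precisely where the hypothesis that $H$ has no automorphism of order $p$ is used nontrivially: one passes from the characteristic-zero separators to mod-$p$ separators by normalizing orbit sums by stabilizer sizes, which are units in $\mathbb{F}_p$ under the hypothesis. Once mod-$p$ separation is secured, iterating the single-vertex pinning over the (polynomially many) vertices of $\mathrm{dom}(\tau)$ yields the claimed Turing reduction.
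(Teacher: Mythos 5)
First, a framing point: the paper does not prove Theorem~\ref{TH:SECRED} --- it is imported from \cite{Gobel18:trees}, which generalizes Faben and Jerrum's pinning argument. Your overall strategy (attach rooted gadgets at labelled vertices, read off linear equations in the unknowns $N_\beta$ via oracle calls, collapse to $\Aut(H)$-orbits, and use $p\nmid|\Aut(H)|$ via Cauchy's theorem to divide by orbit sizes) is the right one, but as written the reduction is not polynomial time. If $m$ gadgets are needed to extract a single pin, then $m\ge 2$ whenever $H$ is not vertex-transitive (you need at least as many equations as orbits), and ``iterating the single-vertex pinning over the vertices of $\mathrm{dom}(\tau)$'' is a recursion of depth $k=|\mathrm{dom}(\tau)|$ with branching factor $m$: to evaluate $N_{\alpha_1,\dots,\alpha_k}(G,v_1,\dots,v_k)$ you must evaluate $m$ instances of a $(k-1)$-pin problem, hence $m^{k}$ oracle calls in total. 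In the intended application (Theorem~\ref{TH:hardness}) the constructed graph $\cG'$ has $\Theta(|E(G)|)$ pinned vertices, so this is exponential in the input. The standard repair is to observe that two vertices pinned to the same element of $V(H)$ can be identified without changing $|\Hom(\cG,H)|$, so one may assume $|\mathrm{dom}(\tau)|\le|V(H)|=O(1)$; only then does your scheme cost constantly many oracle calls, each on a graph of size $|V(G)|+O(1)$. This observation is missing and is essential.

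Second, the step you yourself flag as ``the technical heart'' --- producing, for each pair of distinct $\Aut(H)$-orbits, a rooted gadget whose homomorphism counts into the two orbits differ \emph{modulo} $p$ --- is asserted rather than proved; ``normalizing orbit sums by stabilizer sizes'' is not by itself an argument, since the worry is precisely that all gadget counts could coincide mod $p$ on two distinct orbits. The actual mechanism is a mod-$p$ Lov\'asz cancellation argument: the number of \emph{injective} homomorphisms from a rooted graph $(J,\bar u)$ to $(H,\bar\beta)$ is an integer linear combination (M\"obius inversion over the partition lattice of $V(J)$) of ordinary homomorphism counts from quotients of $(J,\bar u)$; taking $(J,\bar u)=(H,\bar\alpha)$ gives $|\mathrm{Inj}((H,\bar\alpha),(H,\bar\alpha))|=|\mathrm{Stab}_{\Aut(H)}(\bar\alpha)|$, a divisor of $|\Aut(H)|$ and hence a unit mod $p$ exactly because $H$ has no order-$p$ automorphism, whereas $|\mathrm{Inj}((H,\bar\alpha),(H,\bar\beta))|=0$ for $\bar\beta$ outside the orbit of $\bar\alpha$ (an injective endomorphism of a finite graph is an automorphism). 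Hence some quotient of $(H,\bar\alpha)$ separates the two orbits mod $p$. Without this lemma your linear system may be singular over $\mathbb{F}_p$ and the extraction of $e_\alpha$ fails. With these two repairs, your outline coincides with the proof in \cite{Gobel18:trees}.
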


Finally, the first reduction in the chain is our main contribution. We show it in three steps. Recall that we are reducing the problem of finding the number of (weighted) independent sets in a bipartite graph to the problem of finding the number of extensions of a partial homomorphism from a given graph to $H$. First, in Section~\ref{sec:properties} starting from a bipartite graph $G$ we replace its vertices and edges with gadgets, whose exact structure we do not specify at that point. We call those gadget the \emph{vertex} and \emph{edge} gadgets. Then we show that if the vertex and edge gadgets satisfy certain conditions, in terms of the number of homomorphisms of a certain kind from the gadgets to $H$ (Theorem~\ref{TH:hardness}), then $Z_{\lambda_1,\lambda_2}(G)\: \pmod{p}$ can be found in polynomial time from $|\Hom(\cG,H)| \pmod{p}$, where $\cG$ is the partially $H$-labelled graph constructed in the reduction. In the second step, Section~\ref{sec:gadgets}, we introduce several variants of vertex and edge gadgets and show some of their properties. Finally, in Section~\ref{sec:cases} we consider several cases depending on the degree sequence of the graph $H$. In every case we construct vertex and edge gadgets that satisfy the conditions of Theorem~\ref{TH:hardness}, thus completing the reduction.

\section{Hardness gadgets}\label{sec:properties}

Our goal in this section is to describe a general scheme of a reduction from $\BIS{\lambda_1}{\lambda_2}$ to $\parthom pH$, where $p$ is prime and $H$ is a square-free graph. 

The general idea is, given a bipartite graph $G=(V_L\cup V_R,E)$, where $V_L,V_R$ is the bipartition of $G$, to construct a new partially $H$-labelled graph $\cG'$, which is obtained from $G$ by adding a copy of a \emph{vertex} gadget $\cJ$ to every vertex of $G$, and replacing every edge from $E$ with a copy of an \emph{edge} gadget $\cK$. The gadgets are partially $H$-labelled graphs and their pinning functions will define the pinning function of $\cG'$. Since $G$ is a bipartite graph, the vertex gadget comes in two versions, left, $\cJ_L$, and right, $\cJ_R$. Also, both vertex gadgets have a distinguished vertex, $s$ for $\cJ_L$ and $t$ for $\cJ_R$. The edge gadget $\cK$ has two distinguished vertices, $s$ and $t$. These distinguished vertices will be identified with the vertices of the original graph $G$, as shown in Fig.~\ref{fig:reduction-scheme}.

\begin{figure}[ht]
\centering
\includegraphics[totalheight=4cm,keepaspectratio]{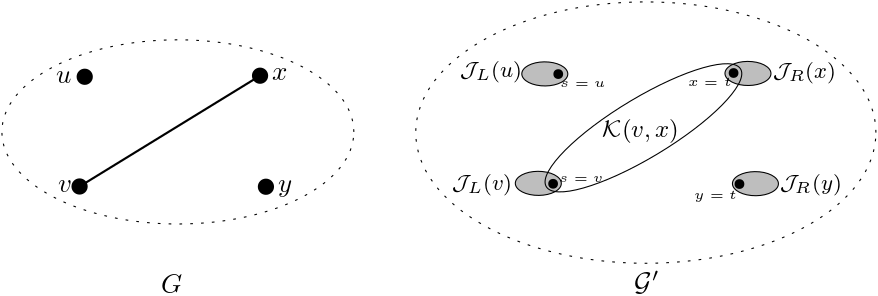}
\caption{The structure of graph $\cG'$. The original graph $G$ is on the left. The resulting graph $\cG'$ is on the right: vertex gadgets $\cJ_L,\cJ_R$ are added to every vertex, and the only edge $vx$ of $G$ is replaced with a copy of gadget $\cK$.}
\label{fig:reduction-scheme}
\end{figure}

The gadgets $\cJ_L,\cJ_R$ are associated with sets $\Dl_1,\Dl_2\sse V(H)$ and vertices $\dl_1\in\Dl_1,\dl_2\in\Dl_2$, respectively. The pinning functions of $\cJ_L,\cJ_R$ will be defined in such a way that for any homomorphism $\vf$ of $\cJ_L$ ($\cJ_R$) to $H$, vertex $s$ (respectively, $t$) is forced to be mapped to $\Dl_1$ (respectively, $\Dl_2$). For $x\in V_L$ let $\cJ_L(x)$ denote the copy of $\cJ_L$ connected to $x$, that is, $s$ in $\cJ_L(x)$ is identified with $x$. For $y\in V_R$ the copy $\cJ_R(y)$ is defined in the same way. The vertices $\dl_1,\dl_2$ will help to encode independent sets of $G$. Specifically, with every independent set $I$ of $G$ we will associate a set of homomorphisms $\vf:\cG'\to H$ such that for every vertex $x\in V_L$, $x\in I$ if and only if $\vf(x)\ne\dl_1$ (recall that $x$ is also a vertex of $\cG'$ identified with $s$ in $\cJ_L(x)$); and similarly, for every $y\in V_R$, $y\in I$ if and only if $\vf(y)\ne\dl_2$. Finally, the edge gadgets $\cK(x,y)$ replacing every edge $xy\in E$ make sure that every homomorphism from $\cG'$ to $H$ is associated with an independent set.

Note that just an association of independent sets with collections of homomorphisms is not enough, the number of homomorphisms in those collections have to allow one to compute the function $Z_{\lambda_1, \lambda_2}(G)$.

Next we introduce conditions such that  if for the graph $H$ there are vertex and edge gadgets satisfying these conditions, then $\BIS{\lambda_1}{\lambda_2}$ for some nonzero (modulo $p$) $\ld_1,\ld_2$ is reducible to $\parthom pH$.

\begin{definition}[Hardness gadget]\label{def::hardness}
A graph $H$ has \emph{hardness gadgets} if there are $\Delta_1,\Delta_2 \sse V(H)$, vertices $\delta_1 \in \Delta_1$ and $\delta_2 \in \Delta_2$, and three partially $H$-labelled graphs $\cJ_L$, $\cJ_R$, and $\cK$ that satisfy the following properties:
\begin{enumerate}[label=(\roman*)]
\item \label{p:size}
$|\Delta_1|-1 \not \equiv 0 \pmod{p}$ , $|\Delta_2|-1 \not \equiv 0 \pmod{p}$;
\item \label{p:expansion}
for any homomorphism $\sigma:\cJ_L\to H$ ($\sg:\cJ_R\to H$) it holds that $\sg(s)\in\Dl_1$ (respectively, $\sg(t)\in\Dl_2$); for any homomorphism $\sigma:\cK\to H$ it holds that $\sg(s)\in\Dl_1, \sg(t)\in\Dl_2$;
\item \label{p:node_map}
for any $\gamma_1 \in \Delta_1, \gm_2\in\Dl_2$, it holds 
\[
|\Hom((\cJ_L,s),(H,\gamma_1))| \equiv |\Hom((\cJ_R,t),(H,\gamma_2))|\equiv1 \pmod{p}, 
\]
and for any $\gamma_1\not \in \Delta_1,\gm_2\not\in\Dl_2$, it holds 
\[
\Hom((\cJ_L,s),(H,\gamma_1)) = \Hom((\cJ_L,s),(H,\gamma_1))=\emptyset;
\]
\item \label{p:edge_map1}
for any  $\om_1 \in \Delta_1-\delta_1, \om_2\in \Delta_2-\delta_2$, it holds $\Hom((\cK,s,t),(H,\om_1,\om_2)) = \emptyset$;
\item \label{p:edge_map2}
for any  $\om_1 \in \Delta_1-\delta_1$, it holds $|\Hom((\cK,s,t),(H,\om_1,\delta_2))|\equiv 1 \pmod{p}$;
\item \label{p:edge_map3}
for any  $\om_2 \in \Delta_2-\delta_2$, it holds $|\Hom((\cK,s,t),(H,\delta_1,\om_2))|\equiv 1 \pmod{p}$;
\item \label{p:edge_map4}
$|\Hom((\cK,s,t),(H,\delta_1,\delta_2))|\equiv 1 \pmod{p}$.
\end{enumerate}
\end{definition}

Now we are ready to state the main result of this section.

\begin{theorem}\label{TH:hardness}
If $H$ has hardness gadgets, then for some $\ld_1,\ld_2\not\equiv0\pmod p$ the problem $\BIS{\lambda_1}{\lambda_2}$ is polynomial time reducible to $\parthom pH$. In particular, $\parthom pH$ is $\#_p$P-complete.
\end{theorem}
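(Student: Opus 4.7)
The plan is to carry out the reduction already suggested at the start of Section~\ref{sec:properties}. Given a bipartite graph $G=(V_L\cup V_R,E)$, I would build a partially $H$-labelled graph $\cG'$ by attaching a fresh copy $\cJ_L(x)$ of $\cJ_L$ at each $x\in V_L$ (identifying the distinguished vertex $s$ of $\cJ_L$ with $x$), attaching a fresh copy $\cJ_R(y)$ of $\cJ_R$ at each $y\in V_R$ (identifying $t$ with $y$), and replacing every edge $xy\in E$ with $x\in V_L$, $y\in V_R$ by a fresh copy $\cK(x,y)$ (identifying $s$ with $x$ and $t$ with $y$). The pinning function of $\cG'$ is the union of the pinning functions of all the gadgets. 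Set $\ld_1:=|\Dl_1|-1$ and $\ld_2:=|\Dl_2|-1$, which are nonzero modulo $p$ by property~\ref{p:size}. The target congruence is
\[
|\Hom(\cG',H)|\equiv Z_{\ld_1,\ld_2}(G)\pmod p,
\]
which reduces $\BIS{\ld_1}{\ld_2}$ to $\parthom pH$ by a single oracle call; the $\#_p$P-completeness statement then follows from Theorem~\ref{TH:weightedCompleteness}.

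To prove the congruence I would first decompose the count. Each gadget attaches to the rest of $\cG'$ only at its distinguished vertices, so for any map $\psi:V(G)\to V(H)$ the number of homomorphisms $\vf:\cG'\to H$ with $\vf|_{V(G)}=\psi$ factors as
\[
\prod_{x\in V_L}\bigl|\Hom((\cJ_L,s),(H,\psi(x)))\bigr|\cdot\prod_{y\in V_R}\bigl|\Hom((\cJ_R,t),(H,\psi(y)))\bigr|\cdot\prod_{xy\in E}\bigl|\Hom((\cK,s,t),(H,\psi(x),\psi(y)))\bigr|,
\]
and summing over all $\psi$ gives $|\Hom(\cG',H)|$ exactly.

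Next I would reduce this sum modulo $p$ using the gadget properties. Properties~\ref{p:expansion} and~\ref{p:node_map} together say that the vertex-gadget factors contribute $1\pmod p$ when $\psi(x)\in\Dl_1$ for every $x\in V_L$ and $\psi(y)\in\Dl_2$ for every $y\in V_R$, and vanish otherwise, so it suffices to sum over such $\psi$. For any such $\psi$, properties~\ref{p:edge_map1}--\ref{p:edge_map4} collectively say that each edge-gadget factor is $\equiv 1\pmod p$ when $\psi(x)=\dl_1$ or $\psi(y)=\dl_2$ and $\equiv 0\pmod p$ otherwise.

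The final step is to reindex by independent sets. Setting $I(\psi):=\{x\in V_L:\psi(x)\neq\dl_1\}\cup\{y\in V_R:\psi(y)\neq\dl_2\}$, the product of edge factors is nonzero mod $p$ precisely when no edge of $G$ joins a vertex of $V_L\cap I(\psi)$ to one of $V_R\cap I(\psi)$, equivalently when $I(\psi)\in\mathcal{IS}(G)$. For each fixed $I\in\mathcal{IS}(G)$, the number of contributing $\psi$ with $I(\psi)=I$ is $\ld_1^{|I\cap V_L|}\ld_2^{|I\cap V_R|}$, since each vertex outside $I$ is forced to $\dl_1$ or $\dl_2$ while each vertex inside $I$ has $|\Dl_i|-1$ free choices in $\Dl_i-\dl_i$. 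Summing yields $Z_{\ld_1,\ld_2}(G)\pmod p$, as desired. No conceptual obstacle stands out: the argument is essentially a careful mod-$p$ bookkeeping that simultaneously invokes all seven conditions of Definition~\ref{def::hardness}, and the only spot demanding attention is the bijection between contributing maps $\psi$ and independent sets of $G$ in the last step.
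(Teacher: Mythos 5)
Your proposal is correct and follows essentially the same route as the paper's proof: the same construction of $\cG'$, the same choice $\ld_1=|\Dl_1|-1$, $\ld_2=|\Dl_2|-1$, and the same mod-$p$ bookkeeping via the seven gadget properties. The only difference is presentational --- you sum over restrictions $\psi:V(G)\to V(H)$ and use the product factorization over gadgets explicitly, whereas the paper partitions $\Hom(\cG',H)$ into equivalence classes indexed by independent sets and computes class cardinalities --- but the two computations are identical in substance.
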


\begin{proof}
Let $\cJ_L,\cJ_R,\cK$ be the collection of gadgets whose existence is the  assumption of the theorem. Let also $\Dl_1,\Dl_2\sse V(H)$ and $\dl_1\in\Dl_1,\dl_2\in\Dl_2$ be sets and elements associated with the gadgets. Recall that we assume the existence of distinguished elements $s,t$ in the gadgets. Let $G=(V_L \cup V_R, E)$ be a bipartite graph. We give a detailed construction of a partially $H$-labelled graph $\cG'$, see also Fig.~\ref{fig:reduction-scheme}. 

\begin{itemize}
\item
The vertex set of $\cG'$ consists of a disjoint copy of $\cJ_L(x)$ for each $x\in V_L$, a disjoint copy of $\cJ_R(y)$ for each $y\in V_R$; distinguished vertices $s,t$ of the gadgets are identified with $x$ and $y$, respectively. Also, the vertex set includes a disjoint copy of $\cK(x,y)$ for each edge $xy\in E$. Again the distinguished vertices $s,t$ of $\cK(x,y)$ are identified with $x$ and $y$, respectively. More formally,
\[
V(\cG')= \Big( \bigcup_{x \in V_L} V(\cJ_L(x)) \Big) \cup \Big( \bigcup_{y \in V_R} V(\cJ_R(y)) \Big) \cup
    \Big( \bigcup_{\edge{x}{y} \in E} V(\cK(x,y)) \Big).
\]
\item
The edge set of $\cG'$ consists of a disjoint copy of the edge set of $\cJ_L(x)$ for each $x\in V_L$, a disjoint copy of the edge set of $\cJ_R(y)$ for each $y\in V_R$, and a disjoint copy of $\cK(x,y)$ for each edge $xy\in E$. More formally,
\[
E(\cG')= \Big( \bigcup_{x \in V_L} E(\cJ_L(x)) \Big) \cup \Big( \bigcup_{y \in V_R} E(\cJ_R(y)) \Big) \cup \Big( \bigcup_{\edge{x}{y} \in E} E(\cK(x,y)) \Big).
\]
\item
The pinning function $\tau$ of $\cG'$ is defined to be the union of the pinning functions of all the gadgets involved: function $\tau_x$, $x\in V_L$, for $\cJ_L(x)$, function $\tau_y$, $y\in V_R$, for $\cJ_R(y)$, and function $\tau_{xy}$, $\edge xy\in E$, for $\cK(x,y)$. More formally, 
\[
\tau= \Big( \bigcup_{x \in V_L} \tau_x \Big)\cup \Big( \bigcup_{y \in V_R} \tau_y \Big) \cup \Big( \bigcup_{\edge{x}{y} \in E} \tau_{xy}\Big).
\]
\end{itemize}

Let us set $\ld_1=|\Dl_1|-1, \ld_2=|\Dl_2|-1$. We now proceed to showing that $Z_{\lambda_1,\lambda_2}(G) \equiv |\Hom(\cG',H)| \pmod{p}$.  First, we show that every homomorphism corresponds to an independent set.

For each $\sigma \in \Hom(\cG',H)$, define  
\begin{equation*}
    \chi_{\sigma}=\{ x \in V_L : \sigma(x)\ne\delta_1 \} \cup\{ y \in V_R : \sigma(y)\ne \delta_2 \}.
\end{equation*}
We claim that $\chi_{\sigma}$ is an independent set. Indeed, assume that for some $\sigma \in\Hom(\cG',H)$ the set $\chi_{\sigma}$ is not an independent set in $G$, i.e.\  there are two vertices $a,b\in \chi_{\sigma}$ such that $\edge{a}{b}\in E(G)$. Without loss of generality, let $a\in V_L$ and $b \in V_R$. 
By the construction of $\chi_{\sigma}$, $\sigma(a)\neq \delta_1$ and $\sigma(b)\neq \delta_2$, by Definition \ref{def::hardness}\ref{p:expansion} we have $\sigma(a)\in \Delta_1 -\delta_1$ and $\sigma(b)\in \Delta_2- \delta_2$. Then by Definition \ref{def::hardness}\ref{p:edge_map1} the set $\Hom(\cK(a,b),a,b),(H,\om_1,\om_2))$ is empty, that is $\sg$ is not a homomorphism. A contradiction.

Let $\sim_\chi$ be a relation on $\Hom(\cG',H)$ given by 
$\sigma \sim_{\chi} \sigma'$ if and only if $\chi_{\sigma} = \chi_{\sigma '}$.
Obviously $\sim_\chi$ is an equivalence relation on $\Hom(\cG',H)$. We denote the class of\linebreak $\Hom(\cG',H) \mathbin{/} \sim_{\chi}$ containing $\sigma$ by $[\sigma]$. Clearly, the $\sim_\chi$-classes correspond to independent sets of $G$. We will need the corresponding mapping
\[
\mathfrak{F}:\Hom(\cG',H) \mathbin{/} \sim_{\chi} \longrightarrow \mathcal{IS}(G),
\quad\text{where}\quad \mathfrak{F}([\sigma])=\chi_{\sigma}
\]

First, we will prove that $\mathfrak{F}$ is bijective, then compute the cardinalities of classes $[\sigma]$.

\smallskip

\noindent
{\sc Claim 1.}
The function $\mathfrak{F}$ is bijective.

\begin{proof}[Proof of Claim 1]
By the definition of $\mathfrak{F}$, it is injective. To show surjectivity 
let $I\in \mathcal{IS}(G)$ be an independent set. We construct a homomorphism $\sigma\in\Hom(\cG',H)$ such that $\chi_{\sigma}=I$:
 
For every vertex $x \in I \cap V_L$, pick a vertex $\gamma^x_I \in \Delta_1-\delta_1$ and set $\sigma(x)=\gamma_I^x$.  For every vertex $y\in N_G(x)$, set $\sigma(y)=\delta_2$. For every vertex $x' \in V_L \setminus I$ set $\sigma(x')=\delta_1$. For every vertex $y \in I \cap V_R$, pick a vertex $\omega^y_I \in \Delta_2-\delta_2$ and set $\sigma(y)=\omega_I^y$. Note that in this case the value of $\sg(y)$ is not yet set, because $y\in N_G(x)$ for no $x\in I$. Finally, for every vertex $y' \in V_R \setminus I$ set $\sigma(y')=\delta_2$.
 
As $I$ is an independent set, for any $v\in I$ and $u\in N_G(v)$ we have $u \not \in I$. By construction of $\sigma$, if $\edge{x}{y}\in E(G)$ and $\sigma(x)=\gamma^x_I$ then $\sigma(y)=\delta_2$. Similarly, if $\sg(y)=\omega^y_I$ then $x\in N_G(y)$, and so $\sg(x)=\dl_1$. If none of the endpoints of an edge $\edge{x}{y}$ belongs to $I$ then $\sigma(x)=\delta_1$ and $\sigma(y)=\delta_2$. By Definition \ref{def::hardness}\ref{p:edge_map1},\ref{p:edge_map2} and \ref{p:edge_map4} $\sigma$ can be extended to a homomorphism from $\cG'$ to $H$. Hence $\mathfrak{F}$ is surjective.
\end{proof}

\smallskip
\noindent
{\sc Claim 2.} 
$|[\sigma]|\equiv(|\Delta_1|-1)^{|V_L \cap \chi_{\sigma}|} (|\Delta_2|-1)^{|V_R \cap \chi_{\sigma}|} \pmod{p}$.

\begin{proof}[Proof of Claim 2]
Is suffices to count the number of homomorphisms $\sigma' \in [\sigma]$. Since $\chi_{\sg'}=\chi_\sg$, for every $x\not\in I$ the value $\sg'(x)$ equals $\dl_1$ or $\dl_2$ depending on whether $x\in V_L$ or $x\in V_R$. As we have shown in Claim~1, for every vertex $x \in I \cap V_L$, $\sg'(x)\in \Delta_1-\delta_1$, so there are $|\Delta_1|-1$ options for $\sg'(x)$. Similarly, there are $|\Dl_2|-1$ options for $\sg'(y)$ for every $y\in I\cap V_R$. Therefore
 \begin{align*}\label{eq::classCard}
     |[\sigma]|&=(|\Delta_1|-1)^{|V_L \cap \chi_{\sigma}|} (|\Delta_2|-1)^{|V_R \cap\chi_{\sigma}|} \Big( \prod_{x\in V_L} |\Hom(\cJ_L(x),x),(H,\sg'(x))| \Big)\\
     &\qquad\times \Big( \prod_{y\in V_R} |\Hom(\cJ_R(y),y),(H,\sg'(y))| \Big) \\
     &\qquad\times\Big( \prod_{xy\in E} |\Hom(\cK(x,y),x,y),(H,\sg'(x),\sg'(y))| \Big) \\
     &\equiv (|\Delta_1|-1)^{|V_L \cap \chi_{\sigma}|} (|\Delta_2|-1)^{|V_R \cap \chi_{\sigma}|} \pmod{p}.
 \end{align*}
 
Where the second equality is by Definition \ref{def::hardness}\ref{p:node_map}, \ref{p:edge_map2}, \ref{p:edge_map3}, \ref{p:edge_map4}.
\end{proof}

Assume that $\sim_{\chi}$ has $M$ classes and $\sg_i$ is a representative of the $i$-th class. Then
\begin{align*}
    |\Hom(\cG',H)|&=\sum_{i=1}^M |[\sigma_i]|\\
                &\equiv \sum_{i=1}^M (|\Delta_1|-1)^{|V_L \cap \chi_{\sigma_i}|} (|\Delta_2|-1)^{{|V_R \cap \chi_{\sigma_i}|}}\\
                &\equiv \sum_{I \in \mathcal{IS}(G)} (|\Delta_1|-1)^{|V_L \cap I|} (|\Delta_2|-1)^{{|V_R \cap I|}}\\
                &\equiv Z_{|\Delta_1|-1,|\Delta_2|-1}(G)\pmod p.
\end{align*}

Therefore $\BIS{\lambda_1}{\lambda_2} \leq_T \parthom pH$. By Definition \ref{def::hardness}\ref{p:size} $\ld_1=|\Delta_1|-1, \ld_2=|\Delta_2|-1 \not \equiv 0 \pmod{p}$. As $\BIS{\ld_1}{\ld_2}$ is $\#_p\mathrm{P}$-complete by Theorem \ref{TH:weightedCompleteness}, so is $\parthom pH$.
\end{proof}

\section{Hardness gadgets and nc-walks}\label{sec:gadgets}

In this section we make the next iteration in constructing hardness gadgets and give a generic structure of such gadgets that will later be adapted to specific types of the graph $H$. 

These gadgets make use of the square-freeness of graph $H$ that we will apply in the following form.

\begin{lemma}\label{lem:square}
Let $H$ be a square-free graph. Then for any $\alpha,\beta \in H$,  $|N_H(\alpha) \cap N_H(\beta)| \leq 1$. 
\end{lemma}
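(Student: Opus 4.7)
The plan is to argue the contrapositive: if two distinct vertices shared at least two common neighbours, then $H$ would contain a 4-cycle, contradicting square-freeness. (Note that the statement is only meaningful when $\alpha \neq \beta$, since for $\alpha = \beta$ the intersection is just $N_H(\alpha)$, which may have any size in a square-free graph; I read the lemma with this implicit.)

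Concretely, I would assume $\alpha \neq \beta$ and suppose for contradiction that there exist distinct $\gamma_1, \gamma_2 \in N_H(\alpha) \cap N_H(\beta)$. Since $H$ is loopless, a vertex is never its own neighbour, so $\gamma_1, \gamma_2 \notin \{\alpha,\beta\}$. Hence the four vertices $\alpha, \gamma_1, \beta, \gamma_2$ are pairwise distinct, and by the choice of $\gamma_1,\gamma_2$ the four edges $\alpha\gamma_1$, $\gamma_1\beta$, $\beta\gamma_2$, $\gamma_2\alpha$ all lie in $E(H)$. These edges form a 4-cycle, contradicting the assumption that $H$ is square-free. There is no real obstacle here; the argument is essentially the definition of ``square-free'' unpacked once.
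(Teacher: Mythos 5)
Your proof is correct and follows the same approach as the paper's, which simply observes that two common neighbours of $\alpha$ and $\beta$ together with $\alpha$ and $\beta$ form a 4-cycle. Your added care in flagging the implicit $\alpha\neq\beta$ hypothesis and in verifying that the four vertices are pairwise distinct (using looplessness) makes the argument slightly more precise than the paper's one-line proof, but the underlying idea is identical.
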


\begin{proof}
If there are two different elements $\gm,\dl$ in $N_H(\alpha) \cap N_H(\beta)$, then $\alpha,\gm,\beta,\dl$ form a 4-cycle.
\end{proof}

We call a walk in $H$ a \emph{non-consecutive-walk} or \emph{nc-walk}, if it does not traverse an edge forth and them immediately back. More formally, an nc-walk is a walk $v_0,v_1,\dots,v_k$ such that for no $i\in[k-1]$ we have $v_{i-1}=v_{i+1}$.

\subsection{Edge gadget}\label{sec:edge-gadget}

Let $W=\gamma_0 \gamma_1 \cdots \gamma_k$ be an nc-walk in $H$ of length at least one. Then the edge gadget $\cK$ is a path $sv_1 v_2  \cdots  v_{k-1}t$, where each $v_i$ is connected to another vertex $u_i$ which is pinned to $\gamma_i$. More formally,
the gadget $\cK=(K,\tau)$ is defined as follows
\begin{gather*}
    V(K)=\{ s,t \} \cup \{ v_i, u_i : i \in [k-1] \}, \\
    E(K)= \{ v_iv_{i+1}: i \in [k-2] \} \cup \{v_iu_i : i \in [k-1] \} \cup \{sv_1,v_{k-1}t\}.
\end{gather*}
The pinning function is $\tau (u_i)=\gamma_i$ for all $i \in [k-1]$.

The next two lemmas give some of the properties listed in Definition~\ref{def::hardness}.

\begin{lemma}[\textbf{Shifting}]\label{lem:shiftingWalk}
Let $H$, $W=\gamma_0 \gamma_1 \cdots \gamma_k$, and $\cK$ be as above. Then 
\begin{enumerate}
\item[(1)] 
For every $\theta \in N_H(\gamma_0)-\gamma_1$ and $\sigma \in\Hom((\cK,s),(H,\theta))$, we have $\sigma(v_{i})=\gamma_{i-1}$ for all $i \in [k-1]$.
\item[(2)]
For every $\theta \in N_H(\gamma_k)-\gm_{k-1}$ and $\sigma \in\Hom((\cK,t),(H,\theta))$, we have $\sigma(v_{i})=\gamma_{i+1}$ for all $i \in [k-1]$.
\end{enumerate}
\end{lemma}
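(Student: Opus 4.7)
The plan is to prove both parts by induction, using the pinning to trap each $\sigma(v_i)$ inside a common neighborhood in $H$ and then invoking Lemma~\ref{lem:square} together with the nc-walk condition to pin that common neighborhood down to a single element.

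First I would note the key structural fact that the pinning function forces $\sigma(u_i) = \gamma_i$ for every $i \in [k-1]$, and therefore each edge $v_i u_i$ of $\cK$ forces $\sigma(v_i) \in N_H(\gamma_i)$. This converts the problem into tracing which element of $N_H(\gamma_i)$ can be chosen, subject to the adjacency constraints along the path $s v_1 v_2 \cdots v_{k-1} t$.

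For part (1), I would induct on $i$. For the base case, the edge $sv_1$ together with the edge $v_1 u_1$ forces $\sigma(v_1) \in N_H(\theta) \cap N_H(\gamma_1)$. Since the hypothesis $\theta \in N_H(\gamma_0) - \gamma_1$ guarantees $\theta \ne \gamma_1$, Lemma~\ref{lem:square} bounds this intersection by one element; since $\gamma_0$ lies in it (being a neighbor of both $\theta$ and $\gamma_1$), we must have $\sigma(v_1) = \gamma_0$. For the inductive step, assuming $\sigma(v_i) = \gamma_{i-1}$, the edges $v_i v_{i+1}$ and $v_{i+1} u_{i+1}$ force $\sigma(v_{i+1}) \in N_H(\gamma_{i-1}) \cap N_H(\gamma_{i+1})$. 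The nc-walk property gives $\gamma_{i-1} \ne \gamma_{i+1}$, so Lemma~\ref{lem:square} again cuts this intersection to at most one element, and $\gamma_i$ witnesses it, yielding $\sigma(v_{i+1}) = \gamma_i$. Part (2) is entirely symmetric: induct from $t$ downward, starting from $\sigma(v_{k-1}) \in N_H(\theta) \cap N_H(\gamma_{k-1})$ with $\theta \ne \gamma_{k-1}$, and propagate back using the same two observations.

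There is no real obstacle here; the only thing to be careful about is justifying the two non-equality conditions that make Lemma~\ref{lem:square} bite, namely $\theta \ne \gamma_1$ (respectively $\theta \ne \gamma_{k-1}$) for the base case, which comes from the exclusion in the statement, and $\gamma_{i-1} \ne \gamma_{i+1}$ for the inductive step, which is exactly the definition of an nc-walk. Without either of these, the intersection of neighborhoods could collapse to a whole neighborhood and the uniqueness argument would fail, so it is worth explicitly flagging where each is used.
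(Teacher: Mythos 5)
Your proof is correct and follows essentially the same route as the paper's: induct along the path, use the pin $\sigma(u_i)=\gamma_i$ together with the preceding vertex's image to trap $\sigma(v_i)$ in an intersection of two neighborhoods, and invoke square-freeness (Lemma~\ref{lem:square}) plus the nc-walk condition to collapse that intersection to a single vertex. If anything, you are slightly more careful than the paper in explicitly flagging where $\theta\ne\gamma_1$ and $\gamma_{i-1}\ne\gamma_{i+1}$ are needed for the square-free bound to apply.
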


\begin{proof}
If $k=1$, then both cases are trivial. We prove item $(1)$ by induction on $j\in [k-1]$, item (2) can be proved using $\gamma_k \gamma_{k-1} \cdots \gamma_0$ instead of $\gamma_0 \gamma_1 \cdots \gamma_k$.

For $j=1$, the vertex $v_1$ must be mapped to a common neighbour of $\theta$ and $\gamma_1$ because $\tau(u_1)=\gamma_1$. It means $\sigma(v_1) \in N_H(\theta) \cap N_H(\gamma_1)=\{\gm_0\}$, because $\gamma_0 \in N_H(\theta) \cap N_H(\gamma_1)$ and $H$ is a square-free graph.

Now assume that $\sigma(v_{j-1})=\gamma_{j-2}$. Similar to the base case, $\sigma(v_j) \in N_H(\gamma_{j-2}) \cap N_H(\gamma_j)$. By the same argument, the only member of this intersection is $\gamma_{j-1}$. Thus, $\sigma(v_j)=\gamma_{j-1}$.
\end{proof}

\begin{lemma}[\textbf{Counting}]\label{lem:coungingWalk}
Let $H$ be a square-free graph and let $W=\gamma_0 \gamma_1 \cdots \gamma_k$, $k\ge1$ be an nc-walk in $H$. For any $\om_s \in N_H (\gamma_0)-\gamma_1$ and $\om_t \in N_H (\gamma_k)-\gamma_{k-1}$ the following equalities hold
\begin{enumerate}
\item [(1)]
    $|\Hom((\cK,s,t),(H,\om_s,\om_t))|=0$,
\item [(2)]
    $|\Hom((\cK,s,t),(H,\gamma_1,\om_t))|=1$,
\item [(3)]
    $|\Hom((\cK,s,t),(H,\om_s,\gamma_{k-1}))|=1$,
\item [(4)]
    $ \displaystyle |\Hom((\cK,s,t),(H,\gamma_1,\gamma_{k-1}))| = 1 + \sum\limits_{i=1}^{k-1} (\deg(\gamma_i)-1)$.
\end{enumerate}
\end{lemma}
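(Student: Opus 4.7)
The plan is to establish parts~$(1)$--$(4)$ by combining the Shifting Lemma (Lemma~\ref{lem:shiftingWalk}) with square-freeness in the form of Lemma~\ref{lem:square}. The unifying theme is that fixing $\sigma(s)$ and $\sigma(t)$ forces each $\sigma(v_i)$ into an intersection of two neighbourhoods, which by Lemma~\ref{lem:square} contains at most one vertex unless the two neighbourhoods coincide (a ``resonance'' of the walk). For part~$(1)$, assuming $k\ge 2$ I would apply the Shifting Lemma part~$(1)$ at $s$ and part~$(2)$ at $t$: these force $\sigma(v_i)=\gamma_{i-1}$ and $\sigma(v_i)=\gamma_{i+1}$ respectively, which conflict at $i=1$ because $\gamma_0\neq\gamma_2$ by the nc-walk property. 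When $k=1$ the gadget is just the edge $st$, and a homomorphism would demand $\omega_s\omega_t\in E(H)$; if $\omega_s\neq\omega_t$ this produces a $4$-cycle $\gamma_0\,\omega_s\,\omega_t\,\gamma_1$ with all four vertices distinct (by looplessness of $H$ and the hypotheses), contradicting square-freeness, while $\omega_s=\omega_t$ would require a loop.

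For parts~$(2)$ and $(3)$, the hypothesis on one endpoint satisfies the Shifting Lemma while the other does not. Taking~$(2)$ as representative, I would apply part~$(2)$ of the Shifting Lemma at $t$, which pins every $\sigma(v_i)=\gamma_{i+1}$; the remaining condition $\sigma(s)=\gamma_1$ is then automatically consistent with $\sigma(v_1)=\gamma_2$ because $\gamma_1\gamma_2\in E(H)$ along the walk. So there is exactly one homomorphism, and part~$(3)$ is its mirror image.

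Part~$(4)$ is the main step because neither endpoint fits the Shifting Lemma's hypothesis. Writing $\theta_i:=\sigma(v_i)$, I would classify homomorphisms by a \emph{break point}, defined as the smallest $j\in\{1,\ldots,k-1\}$ with $\theta_j\neq\gamma_{j+1}$, or ``undefined'' when no such index exists. The undefined case produces the unique ``all-shift-right'' homomorphism with $\theta_i=\gamma_{i+1}$, valid because $\gamma_k\in N_H(\gamma_{k-1})$. When $j$ is defined, the forced values $\theta_i=\gamma_{i+1}$ for $i<j$ give $\theta_{j-1}=\gamma_j$, so $\theta_j$ must lie in $N_H(\gamma_j)-\gamma_{j+1}$, contributing $\deg(\gamma_j)-1$ options; for $i>j$, Lemma~\ref{lem:square} applied inductively forces $\theta_i=\gamma_{i-1}$, since at each step $\theta_{i-1}$ and $\gamma_i$ share $\gamma_{i-1}$ as a common neighbour, which is the \emph{unique} one by square-freeness together with the invariant $\theta_{i-1}\neq\gamma_i$ maintained via the nc-walk. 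Summing yields $1+\sum_{i=1}^{k-1}(\deg(\gamma_i)-1)$.

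The main obstacle is the careful bookkeeping in part~$(4)$: I have to check that the break-point classes are exhaustive and mutually exclusive, that each class contributes \emph{exactly} $\deg(\gamma_j)-1$ homomorphisms (so no candidate $\theta_j$ silently fails to extend downstream), that the tail value $\theta_{k-1}=\gamma_{k-2}$ satisfies the endpoint constraint $\theta_{k-1}\in N_H(\gamma_{k-1})$ (it does, by the walk), and that the square-freeness forcing never degenerates through an accidental coincidence $\theta_{i-1}=\gamma_i$. The nc-walk hypothesis, which guarantees $\gamma_{i-1}\neq\gamma_{i+1}$, is precisely what blocks this degeneracy and makes the inductive application of Lemma~\ref{lem:square} go through.
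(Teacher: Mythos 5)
Your proposal is correct and follows essentially the same route as the paper: the Shifting Lemma forces a contradiction for item (1) and uniqueness for items (2)--(3), and item (4) is counted by classifying homomorphisms according to the first index $j$ with $\sigma(v_j)\neq\gamma_{j+1}$, each class contributing exactly $\deg(\gamma_j)-1$ extensions by the square-freeness/nc-walk forcing. Your explicit treatment of the $k=1$ case of item (1) via the would-be $4$-cycle $\gamma_0\,\omega_s\,\omega_t\,\gamma_1$ is a welcome addition, since the paper's contradiction argument is vacuous when there are no internal vertices $v_i$.
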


\begin{proof}
For item (1), suppose towards contradiction that there is\linebreak $\sigma \in\Hom((\cK,s,t),(H,\om_s,\om_t))$. Then it implies $\sigma \in\Hom((\cK,s),(H,\om_s))$. Therefore by Lemma \ref{lem:shiftingWalk} $\sigma(v_i)=\gamma_{i-1}$ for all $i \in [k-1]$. We also have $\sigma \in\Hom((\cK,t),(H,\om_t))$. Hence by Lemma~\ref{lem:shiftingWalk} $\sigma(v_{i})=\gamma_{i+1}$ for all $i \in [k-1]$, a contradiction.

For item (2), let $\sigma \in\Hom((\cK,s,t),(H,\gamma_1,\om_t))$. Then $\sigma \in\Hom((\cK,t),(H,\om_t))$, and by Lemma \ref{lem:shiftingWalk} all the values of $\sg$ are uniquely determined, that is, there is only one such $\sigma$. So $|\Hom((\cK,s,t),(H,\gamma_1,\om_t))|=1$. The symmetric argument works for item (3).

For item (4), note that there is a homomorphism $\sigma_0 \in\Hom((\cK,s,t),(H,\gamma_1,\gamma_{k-1}))$ such that $\sigma_0(v_j)=\gm_{j+1}$ for all $j \in [k-1]$. Suppose a homomorphism $\sigma \in\Hom((\cK,s,t),(H,\gamma_1,\gamma_{k-1}))$ is such that $\sigma_0 \neq \sigma$, i.e.\ for some $j\in [k-1]$ it holds that $\sigma(v_j)\neq \gamma_{j+1}$. Let $j$ be the smallest such $j$. Then for every $i>j$, $\sigma(v_i)=\gm_{i-1}$ is determined uniquely by Lemma \ref{lem:shiftingWalk} applied to the walk $\gm_j,\dots\gm_{k-1}t$ and path $v_j,\dots,v_k$. Also we assumed, for all $i<j$, that $\sigma(v_i)=\gm_{i+1}$. The remaining case is $i=j$, in which the image of $\sigma(v_j)$ can be chosen from $N_H(\gamma_j)-\gamma_{j+1}$. Thus, there are $\deg(\gamma_j)-1$ options. Hence, the number of homomorphisms $\sg$ such that $\sigma(v_j)\neq \gamma_{j+1}$ is $\deg(\gamma_j)-1$. Finally, 
\begin{equation*}
|\Hom((\cK,s,t),(H,\gamma_1,\gamma_{k-1}))|= 1 + \sum_{i=1}^{k-1} (\deg(\gamma_i)-1).
\end{equation*}
\end{proof}

\subsection{Vertex gadgets}\label{sec:node-gadget}
In this section we construct a vertex gadget. The main role of these gadgets is to restrict the possible images of the designated vertices $s$ and $t$ as required in Definition~\ref{def::hardness}\ref{p:expansion}, and then do it in such a way that property \ref{p:node_map} in Definition~\ref{def::hardness} is also satisfied. We present vertex gadgets of two types.

For the graph $H$ and vertices $\alpha,\beta \in V(H)$, we define gadgets $\cJ_L=(J_L,\tau_L)$ and $\cJ_R=(J_R,\tau_R)$ as follows: Graphs $J_L,J_R$ are just edges $sx$ and $ty$, respectively. The pinning functions are given by $\tau_L(x)=\alpha$, $\tau_R(y)=\beta$. 

The next lemma follows straightforwardly from the definitions and guarantees that these gadgets satisfy items \ref{p:expansion} and \ref{p:node_map} of Definition~\ref{def::hardness} (note that (1) is a direct implication of (3)).

\begin{lemma}\label{lem:neighNode}
For graph $H$, vertices $\alpha,\beta \in V(H)$, and $\Dl_1=N_H(\alpha)$, $\Dl_2=N_H(\beta)$  the following hold
\begin{enumerate}
\item [(1)]
    if $\sigma \in\Hom(\cJ_L,H)$ then $\sigma(s) \in\Dl_1$, and 
    if $\sigma \in\Hom(\cJ_R,H)$ then $\sigma(t) \in\Dl_2$,
\item [(2)]
    for any $\gm_1\in \Dl_1$ and $\gm_2\in\Dl_2$, it holds that $|\Hom((\cJ_L,s),(H,\gm_1))| = |\Hom((\cJ_R,t),(H,\gm_2))|=1$,
\item [(3)]
    for any $\gm'_1\not\in\Dl_1$ and $\gm'_2\not\in\Dl_2$, it holds that $\Hom((\cJ_L,s),(H,\gm'_1))=\Hom((\cJ_R,t),(H,\gm'_2))= \emptyset$.
\end{enumerate}
\end{lemma}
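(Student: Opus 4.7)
The plan is to unpack the definitions of $\cJ_L$ and $\cJ_R$ and observe that the claim is essentially immediate. Recall that $J_L$ consists of a single edge $sx$ with the pinning function $\tau_L(x)=\alpha$, and symmetrically $J_R$ is the edge $ty$ with $\tau_R(y)=\beta$. Thus any homomorphism $\sigma$ from $\cJ_L$ to $H$ is completely determined by the two values $\sigma(s)$ and $\sigma(x)$, subject to the constraints $\sigma(x)=\alpha$ (extension of $\tau_L$) and $\sigma(s)\sigma(x)\in E(H)$ (preservation of the edge $sx$).

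For item (1), I would simply note that $\sigma(x)=\alpha$ is forced, so the edge constraint becomes $\sigma(s)\alpha\in E(H)$, i.e.\ $\sigma(s)\in N_H(\alpha)=\Dl_1$. The analogous reasoning for $\cJ_R$ gives $\sigma(t)\in\Dl_2$.

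For item (3), if $\gm'_1\notin\Dl_1=N_H(\alpha)$, then any $\sigma\in\Hom((\cJ_L,s),(H,\gm'_1))$ would need to send $s$ to $\gm'_1$ and $x$ to $\alpha$, but then $\sigma(s)\sigma(x)=\gm'_1\alpha\notin E(H)$, contradicting that $\sigma$ is a homomorphism. Hence the set is empty. The same argument with $\beta$ in place of $\alpha$ handles $\cJ_R$.

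For item (2), when $\gm_1\in\Dl_1=N_H(\alpha)$, the assignment $\sigma(s)=\gm_1$, $\sigma(x)=\alpha$ is well-defined and satisfies all constraints since by assumption $\gm_1\alpha\in E(H)$; moreover it is the unique such assignment because both values of $\sigma$ are fully determined. Hence $|\Hom((\cJ_L,s),(H,\gm_1))|=1$, and the symmetric argument gives the corresponding statement for $\cJ_R$. There is no genuine obstacle here: the lemma is a direct verification and I expect the entire proof to fit in a few lines.
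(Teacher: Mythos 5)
Your proposal is correct and takes exactly the route the paper intends: the paper states the lemma "follows straightforwardly from the definitions" and omits the verification, which is precisely the unpacking you carry out (the forced value $\sigma(x)=\alpha$ turns the edge constraint into $\sigma(s)\in N_H(\alpha)$, giving all three items). Nothing is missing.
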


The other type of a vertex gadget uses a cycle in $H$. 

Let $C=\theta \gamma_1 \gamma_2 \cdots \gamma_k \theta$ be a cycle in $H$ of length at least three. Gadgets $\cJ_{CL}=(J_{CL},\tau_{CL})$ and $\cJ_{CR}=(J_{CR},\tau_{CR})$ are defined as follows, see Fig.~\ref{fig:circular-vertex},
\begin{gather*}
    V(J_{CL})=\{s\} \cup \{ v_i, u_i : i \in [k] \} \cup \{ x \}, \\
    E(J_{CL})= \{v_iv_{i+1}: i \in [k-1] \} \cup \{v_iu_i : i \in [k] \} \cup \{sv_1, v_{k}s,sx\}.
\end{gather*}
The pinning function is given by $\tau(u_i)=\gamma_{i}$ for all $i \in [k]$ and $\tau (x)=\theta$.

\begin{figure}[ht]
    \centering
  \includegraphics[height=5cm]{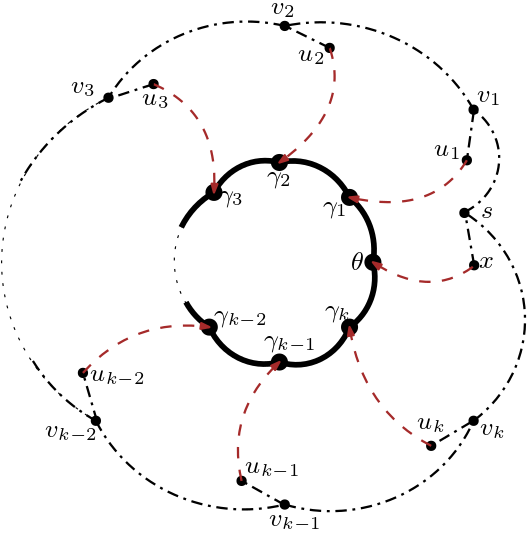}
  \caption{The vertex gadget $\cJ_{CL}$ based on the cycle $C=\theta \gamma_1 \gamma_2 \cdots \gamma_k \theta$. The edges of the gadget are shown by dash-dot lines, and the pinning function by dashed lines.}
    \label{fig:circular-vertex}
\end{figure}

The gadget $\cJ_{CR}$ is defined in the same way, except $s$ is replaced with $t$.

\begin{lemma}\label{lem:CycleNode}
For a square-free graph $H$, a cycle $C=\theta \gamma_1 \gamma_2 \cdots \gamma_k \theta$ in $H$ of length at least three, and $\Delta=\{ \gamma_1,\gamma_k \}$ the following hold
\begin{enumerate}
\item [(1)]
    if $\sigma \in\Hom(\cJ_{CL},H)$ or $\sg\in\Hom(\cJ_{CR},H)$, then $\sigma(s) \in \{\gamma_1,\gamma_k \}$ and $\sigma(t) \in \{\gamma_1,\gamma_k \}$, respectively;
\item [(2)]
    for any $\gamma\in \Delta$, 
    \[
    |\Hom((J_{CL},s),(H,\gamma))| =|\Hom((J_{CR},t),(H,\gamma))| = 1,
    \]
\item [(3)]
    for any $\gamma'\not \in \Delta$
    \[
    \Hom((\cJ_{CL},s),(H,\gamma')) =\Hom((\cJ_{CR},t),(H,\gamma'))= \emptyset.
    \]
\end{enumerate}
\end{lemma}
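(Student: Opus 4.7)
The plan is to reduce all three parts to the Shifting Lemma (Lemma~\ref{lem:shiftingWalk}) by recognizing that, after deleting the pendant $x$ together with the edge $sx$, the gadget $\cJ_{CL}$ is exactly the edge gadget $\cK_W$ associated with the nc-walk $W=\theta\gamma_1\gamma_2\cdots\gamma_k\theta$ of length $k+1$ (this is a closed walk along the cycle $C$, so it is automatically an nc-walk and has length at least three), with the two endpoints $s,t$ of $\cK_W$ identified into the single vertex $s$ of $\cJ_{CL}$.

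For part (1), take any $\sigma\in\Hom(\cJ_{CL},H)$. The edge $sx$ together with $\tau(x)=\theta$ forces $\sigma(s)\in N_H(\theta)$. I would then lift $\sigma$ to a homomorphism $\tilde\sigma\colon\cK_W\to H$ by sending both endpoints of $\cK_W$ to $\sigma(s)$ and copying every other value. Setting $\gamma_0=\gamma_{k+1}=\theta$, suppose $\sigma(s)\in N_H(\theta)\setminus\{\gamma_1,\gamma_k\}$. Then $\sigma(s)\in N_H(\gamma_0)-\gamma_1$ and $\sigma(s)\in N_H(\gamma_{k+1})-\gamma_k$, so parts~(1) and~(2) of Lemma~\ref{lem:shiftingWalk} both apply to $\tilde\sigma$ and yield $\sigma(v_i)=\gamma_{i-1}$ as well as $\sigma(v_i)=\gamma_{i+1}$ for every $i\in[k]$. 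Taking $i=1$ gives $\theta=\gamma_0=\gamma_2$, contradicting the fact that the vertices of the cycle $C$ are distinct. Hence $\sigma(s)\in\{\gamma_1,\gamma_k\}$.

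For part (2), suppose $\sigma(s)=\gamma_1$. Part~(1) of the Shifting Lemma no longer applies, but part~(2) does (using $\gamma_1\in N_H(\theta)$ and $\gamma_1\ne\gamma_k$, which holds because $C$ has length at least three), forcing $\sigma(v_i)=\gamma_{i+1}$ uniquely for every $i\in[k]$. All the remaining vertices $u_1,\dots,u_k,x$ are pinned, so exactly one homomorphism extends $\sigma(s)=\gamma_1$. The case $\sigma(s)=\gamma_k$ is symmetric, invoking part~(1) of the Shifting Lemma instead. Part (3) is immediate from part (1). The statements for $\cJ_{CR}$ follow by relabeling $s$ as $t$ throughout.

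There is no real obstacle, but two items of bookkeeping need care: verifying that the edge-gadget machinery of Lemma~\ref{lem:shiftingWalk} transfers correctly once the endpoints $s$ and $t$ are identified (via the lift $\tilde\sigma$ above), and checking that the distinctness of the cycle vertices — in particular $\gamma_1\ne\gamma_k$ and $\theta\ne\gamma_2$ — is exactly what drives both the uniqueness in part~(2) and the contradiction in part~(1).
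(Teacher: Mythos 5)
Your proposal is correct and follows essentially the same route as the paper: both apply the Shifting Lemma to the closed nc-walk $\theta\gamma_1\cdots\gamma_k\theta$, obtain the contradiction $\gamma_{i-1}=\gamma_{i+1}$ when $\sigma(s)\in N_H(\theta)\setminus\{\gamma_1,\gamma_k\}$, and get uniqueness in part (2) from one direction of the shift. Your explicit lift $\tilde\sigma$ to the path-shaped edge gadget just makes precise a step the paper leaves implicit, and is a welcome piece of bookkeeping.
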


\begin{proof}
For item (1) the cycle $C$ is an nc-walk, therefore Lemma~\ref{lem:shiftingWalk} can be applied. Clearly, $\sg(v)\in N_H(\th)$ for any homomorphism $\sigma \in\Hom(\cJ_{CL},H)$. Suppose that there exists $\sigma \in\Hom(\cJ_{CL},H)$ such that $\sigma(s)=\alpha\in N_H(\theta)\setminus \{\gamma_1,\gamma_k\}$. Then $\sigma \in\Hom((\cJ_{CL},s),(H,\alpha))$, hence by Lemma~\ref{lem:shiftingWalk} $\sigma(v_{i})=\gamma_{i-1}$ for all $i \in [k]$. Also, in a similar way $\sigma(v_{i})=\gamma_{i+1}$ for all $i \in [k]$, a contradiction. 

A proof for $\cJ_{CR}$ is analogous.

For item (2) without loss of generality assume that $\sigma\in\Hom((\cJ_{CL},s),(H,\gamma_1))$, or in other words $\sg(s)=\gm_1$. Since $C$ is an nc-walk by Lemma~\ref{lem:shiftingWalk} it holds $\sigma(v_i)=\gamma_{i+1}$ for all $i\in[k]$, we just need to set $\gamma_{k+1}=\th$. Therefore $\sigma$ is determined uniquely. The same argument works if $\sigma\in\Hom((\cJ_{CL},s),(H,\gamma_k))$, and for $\cJ_{CR}$.

Item (3) follows straightforwardly by Lemma~\ref{lem:CycleNode}(1).
\end{proof}

\section{The hardness of $\parthom pH$}\label{sec:cases}

In this section we prove the hardness part of Theorem~\ref{the:main-intro}. More specifically we will apply Theorem~\ref{TH:hardness} and the constructions from Section~\ref{sec:gadgets} to show that $\BIS{\ld_1}{\ld_2}$ is Turing reducible to $\parthom pH$.

We consider three cases depending on the existence of vertices of certain degree in $H$. In each of the three cases we use slightly different variations of vertex and edge gadgets.

\smallskip

\noindent
\textsc{Case 1.} The graph $H$ has at least two vertices $\alpha$ and $\beta$ such that $\deg(\alpha),\deg(\beta) \not\equiv 1 \pmod{p}$.

\smallskip

Let $S=\{ \gamma \in V(H) : \deg(\gm) \not \equiv 1 \mod{p} \}$; we know that $S$ contains at least two elements. Pick $\alpha,\beta\in S$ such that the distance between them is minimal. Let $W=\alpha \gamma_1 \cdots \gamma_{k-1} \gamma_k \beta$ be a shortest path between $\alpha,\beta$. By the choice of $W$, $\deg(\gamma_i)\equiv 1 \pmod{p}$ for all $i \in [k]$.  

We make an edge gadget $\cK=(K,\tau)$ for this case based on this path as defined in Section~\ref{sec:edge-gadget}. More precisely,
\begin{gather*}
    V(K)=\{ s,t \} \cup \{ v_i, u_i : i \in [k] \} , \\
    E(K)= \{ v_iv_{i+1}: i \in [k-1] \} \cup \{v_iu_i : i \in [k] \} \cup \{sv_1,v_{k}t\}.
\end{gather*}
The pinning function is given by $\tau (u_i)=\gamma_i$ for all $i \in [k]$.

Any path is a nc-walk, so we can apply Lemma~\ref{lem:coungingWalk} to $W$. For the gadgets we use $\Dl_1=N_H(\alpha),\Dl_2=N_H(\beta)$ and $\dl_1=\gm_1,\dl_2=\gm_k$. This satisfies property \ref{p:size} of hardness gadgets, because $\deg(\alpha),\deg(\beta) \not\equiv 1 \pmod{p}$. Then for any $\om_s \in \Dl_1-\dl_1$ and $\om_t \in \Dl-\dl_2$  we have
\begin{enumerate}
\item [(1)]
    $|\Hom((\cK,s,t),(H,\om_s,\om_t))|=0$;
\item [(2)]
    $|\Hom((\cK,s,t),(H,\gm_1,\om_t))| \equiv 1 \pmod{p}$;
\item [(3)]
    $|\Hom((\cK,s,t),(H,\om_s,\gamma_{k}))| \equiv 1 \pmod{p}$
\end{enumerate}
Also, for any $i \in [k]$ we have $\deg(\gamma_i)\equiv 1 \pmod{p}$, and so
\begin{equation*}
|\Hom((\cK,s,t),(H,\gamma_1,\gamma_{k}))| = 1 + \sum_{i=1}^{k} (\deg(\gamma_i)-1)= 1 + 0 \equiv 1 \pmod{p}.
\end{equation*}
Hence,
\begin{enumerate}
\item [(4)]
    $|\Hom((\cK,s,t),(H,\gamma_1,\gamma_{k}))| \equiv 1 \pmod{p}$.
\end{enumerate}

Thus $\cK$ satisfies properties \ref{p:edge_map1}, \ref{p:edge_map2}, \ref{p:edge_map3}, and \ref{p:edge_map4} of  hardness gadgets.

For a vertex gadget we use the first type, that is $\cJ_L,\cJ_R$ are just edges $sx$ and $ty$, respectively, see Fig.~\ref{fig:case1}. 
By Lemma \ref{lem:neighNode} these gadgets satisfy properties \ref{p:expansion} and \ref{p:node_map} of  hardness gadgets.  

Thus Theorem \ref{TH:hardness} yields a required reduction.

\begin{figure}[H]
    \centering
    \includegraphics[height=4cm]{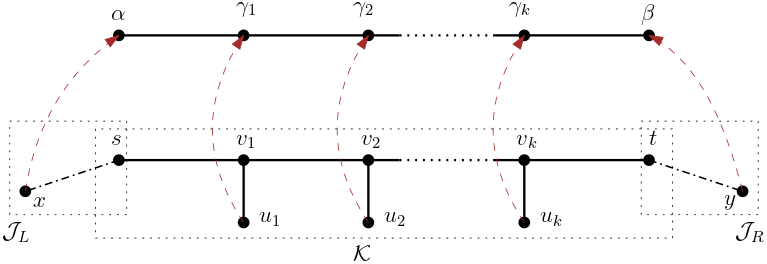}
    \caption{Vertex and edge gadgets based on the path $W=\alpha \gamma_1 \cdots \gamma_k \beta$. The vertex gadgets $\cJ_L$ and $\cJ_R$ are shown as dot-dashed boxes. The pinning function is shown by dashed lines.}
    \label{fig:case1}
\end{figure}

\smallskip

\noindent
\textsc{Case 2.} Graph $H$ has exactly one vertex $\theta$ such that $\deg(\theta) \not\equiv 1 \pmod{p}$.

\smallskip

In this case we further split into two subcases. However, before we proceed with that we rule out the case of trees.

\begin{lemma}\label{lem:treeNodes}
Let $H$ be a tree that has no automorphism of order $p$ and is not a star. Then $H$ has at least two vertices $\alpha$ and $\beta$ such that $\deg(\alpha), \deg(\beta) \not \equiv 1 \pmod{p}$.
\end{lemma}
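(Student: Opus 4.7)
The plan is to prove the contrapositive: if $H$ is a tree in which at most one vertex has degree $\not\equiv 1 \pmod{p}$, then $H$ is either a star or admits an automorphism of order $p$. Call a vertex \emph{internal} if it has degree at least $2$. If $H$ has at most one internal vertex, then by connectedness every other vertex is a leaf attached to that vertex, so $H$ is a star; I may therefore assume $H$ has at least two internal vertices.

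Next I consider the subgraph $T'$ of $H$ induced by the internal vertices. Since $H$ is a tree, the unique $H$-path between any two internal vertices uses only internal vertices (each intermediate vertex on a path has degree at least $2$ in $H$), so $T'$ is a connected subtree of $H$ on at least two vertices, and hence has at least two leaves of $T'$. By the contrapositive hypothesis, at most one vertex of $H$ has degree $\not\equiv 1 \pmod{p}$, so I can pick a leaf $v$ of $T'$ with $\deg(v) \equiv 1 \pmod{p}$. Combined with $v$ being internal in $H$ (so $\deg(v) \geq 2$), this forces $\deg(v) \geq p+1$.

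Because $v$ is a leaf of $T'$, it has exactly one internal neighbour in $H$; the remaining $\deg(v) - 1 \geq p$ neighbours of $v$ are all leaves of $H$. Choosing any $p$ of them, say $\ell_1,\dots,\ell_p$, and letting $\pi$ be the cyclic permutation $(\ell_1\ \ell_2\ \cdots\ \ell_p)$ extended by the identity on the rest of $V(H)$, I obtain a permutation of $V(H)$ that preserves every edge of $H$, since the only edge incident to each $\ell_i$ is $\ell_iv$. Thus $\pi$ is a nontrivial automorphism of $H$ of order exactly $p$, contradicting the hypothesis.

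The only step requiring genuine care is verifying that the subgraph of internal vertices of a tree is itself a tree with at least two leaves, so that I can actually select a leaf of $T'$ distinct from the (at most one) exceptional vertex; once this is in hand, the forced lower bound $\deg(v) \geq p+1$ and the leaf-rotation automorphism are essentially automatic.
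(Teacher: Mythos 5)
Your proof is correct and rests on the same core argument as the paper's: identify an internal vertex that is adjacent to exactly one other internal vertex, observe that if its degree were $\equiv 1 \pmod p$ then it would have at least $p$ leaf-neighbours which could be cyclically permuted to produce an order-$p$ automorphism, and note there are at least two such vertices. The paper locates these vertices as the second-from-each-end vertices $v_1$, $v_{l-1}$ of a maximal path, while you take the leaves of the subtree induced by internal vertices; these are two descriptions of the same set of ``penultimate'' vertices, so the difference is purely presentational (as is your choice to argue by contrapositive).
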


\begin{proof}
Let $P=v_0 v_1 \cdots v_{l-1} v_l$ be a maximal path in $H$, $l$ is the length of $P$. Since $H$ is not a star, $l>2$.

First, observe that $v_0,v_l$ must be leaves, because otherwise $\deg(v_0)>1$ or $\deg(v_l)>1$ and $P$ can be extended in at least one direction.

Next, we show that $\deg(v_1),\deg(v_{l-1})\not \equiv 1 \pmod{p}$. Indeed,  if $\deg(v_1)\equiv 1 \pmod{p}$, then $\deg(v_1)>p$ because $v_1$ is not a leaf itself.  Therefore $N_H(v_1)-v_2=\{ w_0, w_1, \cdots , w_{kp-1}\}$ for some $k>0$ consists of leaves. Set $L=\{w_0,\dots,w_{p-1}\}$ and define $\sigma$ to be the mapping from $H$ to itself given by
\begin{equation*}
    \sigma(w) = 
     \begin{cases}
       w &\quad  \text{if }w\in V(H)\setminus L\\
       w_{i+1 \!\!\!\!\! \pmod{p}} &\quad \text{if } w=w_i \;\; \text{such that} \;\; w_i\in L. 
     \end{cases}
\end{equation*}
Then, as is easily seen, $\sg$ is an automorphism of $H$ of order $p$. Indeed, for any edge $xy\in E(H)$, if $x,y \not\in L$, then $\sigma(x)\sigma(y)=xy\in E(H)$; if $x=w_i\in L$, then $y=v_1$, thus $\sigma(w_i)\sigma(v_1)=w_{i+1 \!\! \pmod{p}}v_1\in E(H)$; finally, both $x$ and $y$ cannot belong to $L$. It is a contradiction with the assumptions on $H$. Hence, the degrees of $v_1$ and $v_{l-1}$ are not equal to $1\pmod p$.
\end{proof}

Thus, we may assume that $H$ is not a tree.

\smallskip
\noindent
\textsc{Case 2.1.} The vertex $\theta$, $\deg(\th)\not\equiv1\pmod p$, is on a cycle $C$.

\smallskip

In this case the edge gadget is based on the cycle $C$. More precisely, 
let $C=\theta \gamma_1 \gamma_2 \cdots \gamma_k \theta$ be a cycle in $H$ of length at least 3 and such that for all $i \in [k]$ it holds that $\deg(v_i)\equiv 1 \pmod{p}$ and $\deg(\theta) \not \equiv 1 \pmod{p}$. We define gadget $\cK=(K,\tau)$ as follows:\\
--  $V(K)=\{ s,t \} \cup \{ v_i, u_i : i \in [k] \}$; \\
--  $E(K)= \{ \edge{v_i}{v_{i+1}}: i \in [k-1] \} \cup \{\edge{v_i}{u_i} : i \in [k] \} \cup \{ \edge{s}{v_1},\edge{v_{k}}{t} \}$;\\
-- the labeling function is given by $\tau (u_i)=\gamma_{i}$ for all $i \in [k]$.\\[3mm]

Set  $\Dl_1=\Dl_2=N_H(\th)$ and $\dl_1=\gm_1,\dl_2=\gm_k$. These parameters satisfy property \ref{p:size} of a hardness gadget, because $\deg(\theta) \not\equiv 1 \pmod{p}$.  A cycle is an nc-walk, so we can apply Lemma~\ref{lem:coungingWalk} to obtain the following

\begin{lemma}\label{lem:countingCycle}
Let $H$ be a square-free graph and $\cK$ an edge gadget based on the cycle $C=\theta \gamma_1 \gamma_2 \cdots \gamma_k \theta$ in $H$. For any $\om_s \in \Dl_1-\dl_1$ and $\om_t \in \Dl_2-\dl_2$,
\begin{enumerate}
\item [(1)]
    $|\Hom((\cK,s,t),(H,\om_s,\om_t))|=0$;
\item [(2)]
    $|\Hom((\cK,s,t),(H,\dl_1,\om_t))| \equiv 1 \pmod{p}$;
\item [(3)]
    $|Hom((\cK,s,t),(H,\om_s,\dl_2))| \equiv 1 \pmod{p}$;
\item [(4)]
    $|Hom((\cK,s,t),(H,\dl_1,\dl_2))|  \equiv 1 \pmod{p}$.
\end{enumerate}
\end{lemma}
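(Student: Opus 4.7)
The plan is to recognize that the cycle-based edge gadget $\cK$ in Case 2.1 is structurally identical to the generic edge gadget from Section~\ref{sec:edge-gadget}, associated with the closed walk obtained by traversing the cycle $C$ from $\theta$ back to $\theta$. Concretely, I would set $\gamma_0' = \theta$, $\gamma_i' = \gamma_i$ for $i \in [k]$, and $\gamma_{k+1}' = \theta$, obtaining a sequence $W = \gamma_0' \gamma_1' \cdots \gamma_{k+1}'$ of length $k+1$. The gadget $\cK$ in Case 2.1 is precisely the edge gadget from Section~\ref{sec:edge-gadget} built from $W$: the path from $s$ to $t$ has length $k+1$, there are $k$ internal vertices $v_i$, and $u_i$ is pinned to $\gamma_i' = \gamma_i$ for $i \in [k]$.

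The first step is to verify that $W$ is an nc-walk, since Lemma~\ref{lem:coungingWalk} requires this. For a proper cycle of length at least three, consecutive triples $(\gamma_{i-1}', \gamma_i', \gamma_{i+1}')$ consist of three distinct vertices, so the condition $\gamma_{i-1}' \neq \gamma_{i+1}'$ holds everywhere, including at the wrap-around where we need $\gamma_2 \neq \theta$ and $\gamma_{k-1} \neq \theta$. This is exactly where the assumption that $C$ has length at least three is crucial; for a 2-cycle the walk would backtrack on itself.

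Having identified $\cK$ with the generic edge gadget for the nc-walk $W$, I would directly apply Lemma~\ref{lem:coungingWalk} with endpoints $\gamma_0' = \gamma_{k+1}' = \theta$, so that $N_H(\gamma_0') - \gamma_1' = N_H(\theta) - \gamma_1 = \Delta_1 - \delta_1$ and symmetrically $N_H(\gamma_{k+1}') - \gamma_k' = \Delta_2 - \delta_2$. Items (1), (2), and (3) of Lemma~\ref{lem:coungingWalk} then give items (1), (2), and (3) of the statement verbatim (with the $\equiv 1 \pmod p$ conclusion trivially following from the exact equality $=1$).

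For item (4), Lemma~\ref{lem:coungingWalk}(4) yields
\[
|\Hom((\cK,s,t),(H,\delta_1,\delta_2))| \;=\; 1 + \sum_{i=1}^{k} (\deg(\gamma_i) - 1).
\]
Now I would invoke the defining hypothesis of Case 2.1, namely that $\theta$ is the unique vertex of $H$ whose degree is not $\equiv 1 \pmod p$, so every $\gamma_i$ on $C$ satisfies $\deg(\gamma_i) \equiv 1 \pmod p$. Each summand is therefore $\equiv 0 \pmod p$, so the total is $\equiv 1 \pmod p$, completing item (4). I do not foresee a real obstacle: the only non-mechanical point is the nc-walk verification at the wrap-around, which is handled by the cycle-length hypothesis.
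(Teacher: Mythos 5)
Your proposal matches the paper's proof: the paper likewise observes that the cycle $C$ is an nc-walk, applies Lemma~\ref{lem:coungingWalk} to obtain items (1)--(3) directly, and for item (4) uses that $\deg(\gamma_i)\equiv 1\pmod p$ (from the Case~2.1 hypothesis) to collapse the sum $\sum_{i=1}^k(\deg(\gamma_i)-1)$ to $0$. Your extra care in spelling out the identification with the generic walk-gadget and in checking the nc-walk condition at the wrap-around positions is a sound elaboration of the same argument, not a different route.
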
 

\begin{proof}
The cycle $C$ is an nc-walk. Therefore by Lemma~\ref{lem:coungingWalk} items (1), (2), and (3) hold. For item (4) note that $\deg(\gamma_i)\equiv 1 \pmod{p}$ for all $i \in [k]$, therefore
\begin{equation*}
    |\Hom((\cK,s,t),(H,\gamma_1,\gamma_{k}))| = 1 + \sum_{i=1}^{k} (\deg(\gamma_i)-1)= 1 + 0 \equiv 1 \pmod{p}.
\end{equation*}
\end{proof}

By Lemma~\ref{lem:countingCycle} gadget $\cK$ satisfies properties \ref{p:edge_map1}, \ref{p:edge_map2}, \ref{p:edge_map3}, and \ref{p:edge_map4} of hardness gadgets.

For vertex gadgets we take $\cJ_L,\cJ_R$ (which are just edges) defined in Section~\ref{sec:node-gadget}, with $\alpha=\beta=\th$, see Fig~\ref{fig:case2.1}.  By Lemma \ref{lem:neighNode}, these gadgets satisfy properties \ref{p:expansion} and \ref{p:node_map} of hardness gadgets.  

Finally, by Theorem~\ref{TH:hardness} $\BIS{\ld_1}{\ld_2}$ is Turing reducible to $\parthom pH$.

\begin{figure}[ht]
    \centering
    \includegraphics[height=5cm]{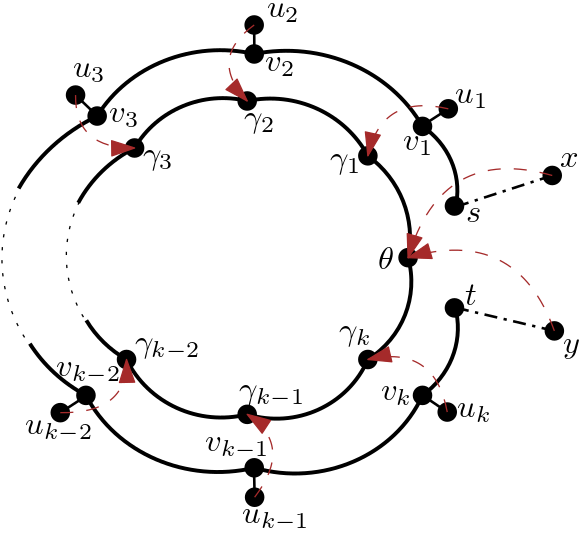}
    \caption{Hardness gadgets corresponding to the cycle $C=\theta \gamma_1 \cdots \gamma_k \theta$. The vertex gadgets $\cJ_L$ and $\cJ_R$ are shown by dot-dashed lines. The pinning function is shown by dashed lines.}
    \label{fig:case2.1}
\end{figure}

\smallskip
\noindent
\textsc{Case 2.2.} The vertex $\theta$ is not on any cycle.

\smallskip

Since $H$ is not a tree, it contains at least one cycle; let $C$ be such a cycle. Let $P=\gamma_0 \gamma_{k+1} \gamma_{k+2} \cdots \gamma_{k+k'} \theta$ be a shortest path from a vertex $\gamma_0$ on cycle $C=\gamma_0 \gamma_1 \gamma_2 \cdots \gamma_k \gamma_0$, $k\ge2$, to $\theta$. Note that $\deg(\gamma_i) \equiv 1 \pmod{p}$ for all $\gm_i$, $i\in \{0,\dots,k+k'\}$. Edge gadget $\cK$ in this case is based on the walk
$$
W=\theta \gamma_{k+k'} \cdots \gamma_{k+2} \gamma_{k+1} \gamma_0 \gamma_1 \gamma_2 \cdots \gamma_k\gm_0 \gm_{k+1} \gamma_{k+2} \cdots \gamma_{k+k'} \theta.
$$
Note that $W$ is an nc-walk. More precisely, 
the gadget $\cK=(K,\tau)$ is defined as follows:\\
--  $V(K)=\{ s,t \} \cup \{ v_i, u_i : i \in [k+2k'+2] \}$;\\
--  $E(K)= \{ v_iv_{i+1}: i \in [k+2k'+1] \} \cup \{ v_iu_i : i \in [k+2k'+2] \} \cup \{sv_1,v_{k+2k'+2}t\}$;\\
-- the pinning function is given by
\begin{equation*}
    \tau(u_i) = 
     \begin{cases}
       \gamma_{k+k'+1-i} &\quad 1 \leq i \leq k',\\
       \gamma_{i-k'-1} &\quad k'+1 \leq i \leq k+k'+1, \\
       \gamma_0 &\quad i=k+k'+2,\\
       \gamma_{i-k'-2} & \quad k+k'+3 \leq i \leq k+2k'+2.\\ 
     \end{cases}
\end{equation*}

Set $\delta_1=\delta_2= \gamma_{k+k'}$ and $\Delta_1 =\Delta_2= N_H(\theta)$. These parameters satisfy property \ref{p:size} of hardness gadgets, because $\deg(\theta) \not\equiv 1 \pmod{p}$. As $W$ is an nc-walk, by Lemma~\ref{lem:coungingWalk} for any $\om\in N_H (\theta)-\gamma_{k+k'}$, we have
\begin{enumerate}
\item [(1)]
    $|\Hom((\cK,s,t),(H,\om,\om))|=0$;
\item [(2)]
    $|\Hom((\cK,s,t),(H,\gamma_{k+k'},\om))|\equiv 1 \pmod{p}$;
\item [(3)]
    $|\Hom((\cK,s,t),(H,\om,\gamma_{k+k'}))|\equiv 1 \pmod{p}$.
\end{enumerate}
Also, $\deg(\gamma_i)\equiv 1 \pmod{p}$ for all $i \in [k+k']\cup \{0 \}$. Therefore
\begin{equation*}
    |\Hom((\cK,s,t),(H,\gamma_{k+k'},\gamma_{k+k'}))| = 1 + \sum_{i=1}^{k+k'} (\deg(\gamma_i)-1)= 1 + 0 \equiv 1 \pmod{p}.
\end{equation*}
Hence,
\begin{enumerate}
    \item [(4)]
    $|\Hom((\cK,s,t),(H,\gamma_{k+k'},\gamma_{k+k'}))| \equiv 1 \pmod{p}$.
\end{enumerate}
Thus the gadget $\cK$ satisfies properties \ref{p:edge_map1}, \ref{p:edge_map2}, \ref{p:edge_map3}, and \ref{p:edge_map4} of hardness gadgets.

Finally, for vertex gadgets we again use gadgets $\cJ_L,\cJ_R$ introduced in Section~\ref{sec:node-gadget}, with $\alpha=\beta=\th$, see Fig~\ref{fig:case2.2}. By Lemma~\ref{lem:neighNode}, these gadgets satisfy properties \ref{p:expansion} and \ref{p:node_map} of hardness gadgets. Thus, by Theorem~\ref{TH:hardness} $\BIS{\ld_1}{\ld_2}$, $\ld_1=\ld_2=|N_H(\th)|-1$ is Turing reducible to $\parthom pH$. 

\begin{figure}[ht]
    \centering
  \includegraphics[height=6cm]{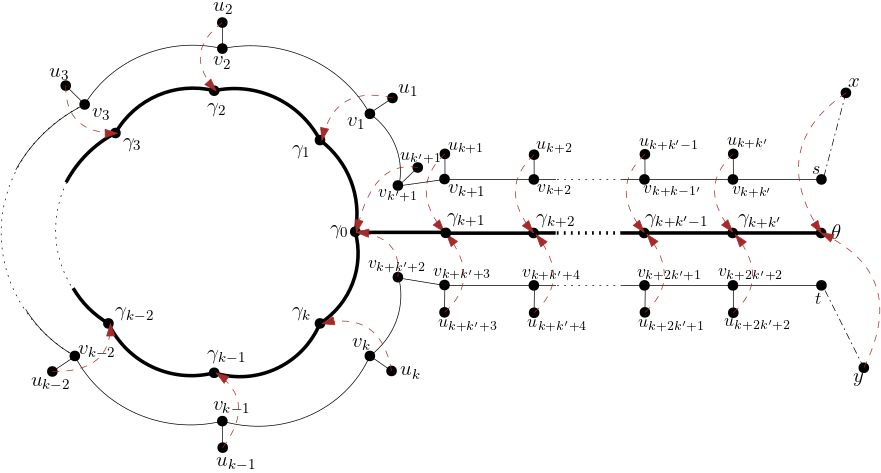}
  \caption{Gadget $\cK$ based on the nc-walk $W=\theta \gamma_{k+k'} \cdots \gamma_{k+2} \gamma_{k+1} \gamma_0\gamma_1 \gamma_2 \cdots \gamma_k\gm_0 \gamma_{k+1} \gamma_{k+2} \cdots \gamma_{k+k'} \theta$. The vertex gadgets $\cJ_L$ and $\cJ_R$ corresponding to $\theta$ are shown by dot-dashed lines. The pinning function is shown by dashed lines.}
    \label{fig:case2.2}
\end{figure}

\smallskip
\noindent
\textsc{Case 3.} For every vertex $\gamma\in V(H)$ it holds $\deg(\gamma) \equiv 1 \pmod{p}$.

\smallskip

By Lemma \ref{lem:treeNodes}, $H$ is not a tree, therefore it contains a cycle $C=\theta \gm_1 \gm_2 \cdots \gm_k \theta$ such that $k\ge2$. Set $\delta_1=\gamma_1 ,\delta_2= \gamma_k$ and $\Delta_1 =\Delta_2= \{\gamma_1, \gamma_k \}$. These parameters satisfy property \ref{p:size} of hardness gadgets, because $|\Dl_1|=|\Dl_2|\not\equiv 1 \pmod{p}$. An edge gadget $\cK$ is based on this cycle $C$ as in Case~2.1. More precisely, we define gadget $\cK=(K,\tau)$ as follows:\\
--  $V(K)=\{ s,t \} \cup \{ v_i, u_i : i \in [k] \}$; \\
--  $E(K)= \{ \edge{v_i}{v_{i+1}}: i \in [k-1] \} \cup \{\edge{v_i}{u_i} : i \in [k] \} \cup \{ \edge{s}{v_1},\edge{v_{k}}{t} \}$;\\
-- the labeling function is given by $\tau (u_i)=\gamma_{i}$ for all $i \in [k]$.\\[3mm]

A cycle is an nc-walk, so as in Case~2.1 we can apply Lemma~\ref{lem:coungingWalk} to obtain the following

\begin{lemma}\label{lem:countingCycle3}
Let $H$ be a square-free graph and $\cK$ an edge gadget based on the cycle $C=\theta \gamma_1 \gamma_2 \cdots \gamma_k \theta$, $k\ge2$, in $H$. For any $\om_s \in \Dl_1-\dl_1$ and $\om_t \in \Dl_2-\dl_2$,
\begin{enumerate}
\item [(1)]
    $|\Hom((\cK,s,t),(H,\om_s,\om_t))|=0$;
\item [(2)]
    $|\Hom((\cK,s,t),(H,\dl_1,\om_t))| \equiv 1 \pmod{p}$;
\item [(3)]
    $|Hom((\cK,s,t),(H,\om_s,\dl_2))| \equiv 1 \pmod{p}$;
\item [(4)]
    $|Hom((\cK,s,t),(H,\dl_1,\dl_2))|  \equiv 1 \pmod{p}$.
\end{enumerate}
\end{lemma}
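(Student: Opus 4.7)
The plan is to apply Lemma~\ref{lem:coungingWalk} directly, taking the cycle $C = \theta \gamma_1 \gamma_2 \cdots \gamma_k \theta$ as an nc-walk of length $k+1$ from $\theta$ to $\theta$. This is essentially the same argument as Lemma~\ref{lem:countingCycle}, but repackaged for Case~3, where the degree condition $\deg(\gm) \equiv 1 \pmod p$ is global rather than restricted to the vertices of the cycle.

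First I would verify that $C$ qualifies as an nc-walk: consecutive vertices are distinct edges of a simple cycle, and since $k\ge 2$ there is no immediate backtracking $v_{i-1} = v_{i+1}$ along the traversal. The gadget $\cK$ is then exactly the edge gadget associated to this nc-walk as in Section~\ref{sec:edge-gadget}, with $s$ corresponding to the starting $\theta$ and $t$ to the ending $\theta$. In the notation of Lemma~\ref{lem:coungingWalk} the walk endpoints are $\gamma_0'=\theta$ and $\gamma_{k+1}'=\theta$, while $\gamma_1'=\gamma_1=\delta_1$ and $\gamma_k'=\gamma_k=\delta_2$.

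For items (1)--(3), note that $\Dl_1-\dl_1$ and $\Dl_2-\dl_2$ both consist of vertices in $N_H(\theta)$ different from $\gamma_1$ and $\gamma_k$ respectively (in Case~3, each is just a singleton $\{\gamma_k\}$ or $\{\gamma_1\}$, but the argument is more general). Thus the hypotheses of Lemma~\ref{lem:coungingWalk}(1)--(3) apply and the three equalities transfer to our setting, each yielding the desired value mod~$p$.

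The only place where Case~3 differs from Case~2.1 is item (4). Lemma~\ref{lem:coungingWalk}(4) applied to our walk gives
\[
|\Hom((\cK,s,t),(H,\delta_1,\delta_2))| = 1 + \sum_{i=1}^{k}(\deg(\gamma_i)-1).
\]
Since in Case~3 \emph{every} vertex of $H$ satisfies $\deg(\gm) \equiv 1 \pmod p$, each summand is $\equiv 0 \pmod p$, so the right-hand side is $\equiv 1 \pmod p$, as claimed. There is no genuine obstacle here: the whole point of isolating Case~3 was precisely to ensure that this degree sum vanishes modulo $p$ for the cycle $C$, and the lemma is essentially a verification that the cycle-based edge gadget inherits properties~\ref{p:edge_map1}--\ref{p:edge_map4} of hardness gadgets from the counting identities of Lemma~\ref{lem:coungingWalk}.
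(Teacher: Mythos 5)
Your proof is correct and matches the paper's approach: the paper proves this lemma (and the identical Lemma~\ref{lem:countingCycle} in Case~2.1) by treating the cycle as an nc-walk, invoking Lemma~\ref{lem:coungingWalk} for items (1)--(3), and using the degree condition $\deg(\gamma_i)\equiv 1\pmod p$ to kill the sum in item (4). Your additional checks --- that $k\ge 2$ rules out backtracking and that $\Dl_1-\dl_1\subseteq N_H(\theta)-\gamma_1$, $\Dl_2-\dl_2\subseteq N_H(\theta)-\gamma_k$ so the hypotheses of Lemma~\ref{lem:coungingWalk} apply --- are correct and slightly more explicit than the paper.
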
  

Since $\deg(\gm) \equiv 1 \pmod{p}$ for every $\gamma\in V(H)$, By Lemma~\ref{lem:countingCycle3} $\cK$ satisfies properties \ref{p:edge_map1}, \ref{p:edge_map2}, \ref{p:edge_map3}, and \ref{p:edge_map4} of hardness gadgets. 

For vertex gadgets we choose $\cJ_{CL},\cJ_{CR}$ defined in Section~\ref{sec:node-gadget}, see Fig~\ref{fig:case3}. More precisely, gadgets $\cJ_{CL}=(J_{CL},\tau_{CL})$ and $\cJ_{CR}=(J_{CR},\tau_{CR})$ are defined as follows, see Fig.~\ref{fig:circular-vertex},
\begin{gather*}
    V(J_{CL})=\{s\} \cup \{ v_i, u_i : i \in [k] \} \cup \{ x \}, \\
    E(J_{CL})= \{v_iv_{i+1}: i \in [k-1] \} \cup \{v_iu_i : i \in [k] \} \cup \{sv_1, v_{k}s,sx\}.
\end{gather*}
The pinning function is given by $\tau(u_i)=\gamma_{i}$ for all $i \in [k]$ and $\tau (x)=\theta$ (for $\cJ_{CR}$, $\tau(y)=\theta$). Gadget $\cJ_{CR}$ is defined the same way with replacement of $s$ with $t$ and $x$ with $y$. By Lemma \ref{lem:CycleNode}, these gadgets satisfy properties \ref{p:expansion} and \ref{p:node_map} of hardness gadgets. 
Therefore, by Theorem \ref{TH:hardness}, $\BIS{\ld_1}{\ld_2}$, $\ld_1=\ld_2=|\{\gm_1,\gm_k\}|-1=1$ is Turing reducible to $\parthom pH$. 

\begin{figure}[ht]
    \centering
    \includegraphics[height=5cm]{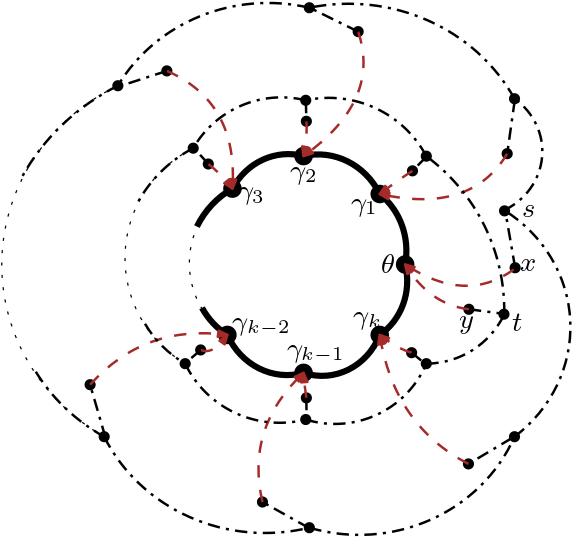}\hspace{1cm}
    \includegraphics[height=5cm]{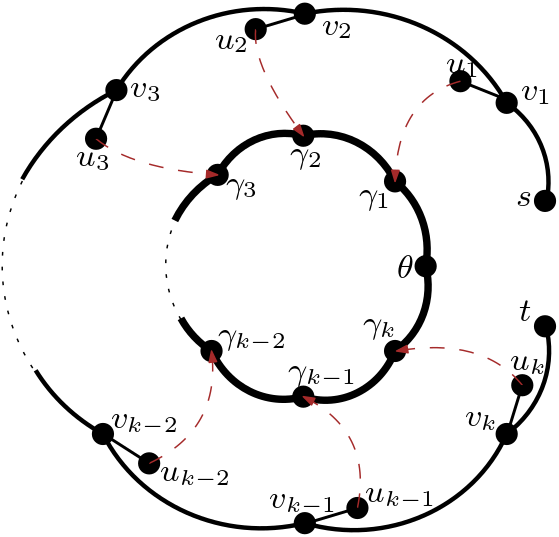}
    \caption{Hardness gadgets based on cycle $C=\theta \gamma_1 \cdots \gamma_k \theta$.
On the left are the vertex gadgets $\cJ_{CL}$ and $\cJ_{CR}$  shown by dot-dashed lines with $J_{CR}$ inside $\cJ_{CL}$. $\cJ_{CL}$ is the cycle containing vertex $s$, and $\cJ_{CR}$ is the cycle containing vertex $t$. The remaining vertices of the gadgets are not labelled. The pinning function is shown by dashed lines.
On the right, the edge gadget $\cK$ is highlighted. Again, the pinning function is represented by dashed lines.}
    \label{fig:case3}
\end{figure}

\bibliographystyle{plain}

\bibliography{modular}
\end{document}